\newcommand{\argmin}{\mathop{\rm argmin}}
\newcommand{\argmax}{\mathop{\rm argmax}}
\newtheorem{lem}{Lemma}[section]
\newtheorem{defn}{Definition}[section]
\def\approxcorrect{\checkmark\kern-1.1ex\raisebox{.89ex}{$\times$}}
\def\eqref#1{equation~\ref{#1}}
\def\1{\bm{1}}
\DeclareMathAlphabet{\mathsfit}{\encodingdefault}{\sfdefault}{m}{sl}
\SetMathAlphabet{\mathsfit}{bold}{\encodingdefault}{\sfdefault}{bx}{n}
\title{Pointwise Convergence in Games with Conflicting Interest}
\author{
   Nanxiang Zhou,Jing Dong and Baoxiang Wang\\
  \texttt{\{nanxiangzhou,jingdong\}@link.cuhk.edu.cn, bxiangwang@cuhk.edu.cn}
}
\begin{document}

\maketitle

\begin{abstract}
In this work, we introduce the concept of non-negative weighted regret, an extension of non-negative regret \cite{anagnostides2022last} in games. Investigating games with non-negative weighted regret helps us to understand games with conflicting interests, including harmonic games and important classes of zero-sum games.
We show that optimistic variants of classical no-regret learning algorithms, namely optimistic mirror descent (OMD) and optimistic follow the regularized leader (OFTRL), converge to an $\epsilon$-approximate Nash equilibrium at a rate of $O(1/\epsilon^2)$.
Consequently, they guarantee pointwise convergence to a Nash equilibrium if there are only finitely many Nash equilibria in the game.
These algorithms are robust in the sense the convergence holds even if the players deviate from prescribed strategies, as long as the corruption level remains finite.
Our theoretical findings are supported by empirical evaluations of OMD and OFTRL on the game of matching pennies and harmonic game instances.
\end{abstract}

\section{Introduction}
A central question in the study of strategic learning dynamics is the evolution of strategies through iterative interactions: how do these strategies develop over time, and what conditions determine their convergence to equilibrium versus exhibiting recurrent patterns? The challenge is fundamentally rooted in computational complexity theory, as finding Nash equilibria has been proved to be PPAD-hard \cite{daskalakis2009complexity}—a complexity classification that suggests the widespread belief in the non-existence of efficient polynomial-time algorithms for equilibrium computation in arbitrary games.

Nevertheless, this computational intractability does not uniformly apply across all game classes. Zero-sum games, for instance, present a notably more optimistic scenario \cite{daskalakis2019last,mertikopoulos2019optimistic,weilinear20,daskalakis2020independent,wei2021last,yangt22,leonardos2022global,anagnostides2022last}. It is shown that the optimistic gradient descent can find the Nash equilibrium. This is later extended to classic no-regret learning algorithms, the online mirror descent (MD) and the follow the regularized leader (FTRL), by developing optimistic variants of them.

The taxonomy of finite games reveals two principal components: potential games and harmonic games. Potential games, characterized by a potential function that captures aligned player interests, have been extensively studied with numerous convergence results \cite{anagnostides2022last,cominetti2010payoff,palaiopanos2017multiplicative,heliou2017learning,daskalakis2011near,syrgkanis2015fast,chen2020hedging,hsieh2021adaptive,dong2024convergence}. In contrast, harmonic games embody the competitive aspects of strategic interaction, where any unilateral strategy adjustment invariably creates incentives for other players to respond with counteracting deviations. While the dynamics of learning algorithms in harmonic games generally yield negative results \cite{letcher2019differentiable,legacci2024geometric,legacci2024no}, certain optimistic variants of regularized learning algorithms have demonstrated success \cite{legacci2024no}. It has been shown that all finite games can be decomposed into a potential and a harmonic component \cite{abdou2022decomposition}. Although convergent algorithms are both found for potential and harmonic games, the analysis for them are drastically different.
\begin{table*}[htb]\label{table1}
\centering
\caption{Overview of results in games with non-negative regret and harmonic games}
\begin{tabular}{|l|ll|}
\hline
   & \multicolumn{1}{c|}{\begin{tabular}[c]{@{}c@{}}Games with non-negative regret\end{tabular}}                                & \multicolumn{1}{c|}{\begin{tabular}[c]{@{}c@{}}Harmonic games\end{tabular}} \\ \hline
\cite{anagnostides2022last}       & \multicolumn{1}{l|}{\begin{tabular}[c]{@{}l@{}}- $O(1/\sqrt{T})$ to approximate Nash equilibrium\\ - No pointwise convergence\end{tabular}} & - No Guarantee                                                                            \\ \hline
\cite{legacci2024no}     & \multicolumn{1}{l|}{- No Guarantee         }                                                                                                         & \multicolumn{1}{l|}{\begin{tabular}[c]{@{}l@{}}- Asymptotic convergence \\ to the set of NE\end{tabular}}                                      \\ \hline
Ours & \multicolumn{2}{c|}{\begin{tabular}[l|]{@{}l@{}}- $O(1/\sqrt{T})$ to approximate Nash equilibrium\\ - Asymptotic convergence to set of NE\\ - Pointwise convergence to NE if the set of NE is discrete, \\  \ \ \  such as games with finitely many NE\end{tabular}}                       \\ \hline
\end{tabular}
\end{table*}
A promising starting point for unifying the positive results of learning dynamics in potential and harmonic games may lie in examining the dynamics within zero-sum games. Zero-sum games, such as the classic example of matching pennies, epitomize conflict in strategic interactions. Notably, if a zero-sum game possesses a fully mixed Nash equilibrium, then it is harmonic \cite{legacci2024no}. However, the relationship between zero-sum and harmonic games is nuanced; not all zero-sum games are harmonic, and some can be fully potential. While the relation between zero-sum and harmonic games is not immediately clear, some variants of the optimistic no-regret learning algorithms are found to be convergent in both of them \cite{daskalakis2019last,legacci2024no}.

In this work, we introduce the concept of non-negative weighted regret, which encompasses harmonic games and important classes of zero-sum games, such as polymatrix zero-sum games. This framework also extends to constant-sum polymatrix games. The notion of weighted regret builds upon the concept of non-negative regret introduced in \cite{anagnostides2022last}. We focus on the analysis of optimistic no-regret learning algorithms, specifically optimistic mirror descent (OMD) and optimistic follow the regularized leader (OFTRL), within the context of games characterized by non-negative weighted regret. We show that both algorithms can find an $\epsilon$-approximate Nash equilibrium in $O(1/\epsilon^2)$ iterations. Moreover, we establish pointwise convergence of the algorithms to the set of Nash equilibrium, which is the first result of this kind for harmonic games and games with non-negative regrets. When the set of Nash equilibrium is discrete, we show that the iterates converge to a Nash equilibrium of the game, which recovers the convergence results in two-player zero-sum games \cite{daskalakis2019last}. Table \ref{table1} summarizes our results and comparison with the existing results. 

Additionally, we investigate the dynamics of OMD and OFTRL when players are permitted to deviate from their algorithmically determined strategies by a finite cumulative amount. We show that when the cumulative deviation is finite, our algorithm still enjoys convergence in the class of games with non-negative weighted regret. This allows us to lift the assumption that players will adhere to their prescribed algorithms in practical problems.


Finally, to substantiate our theoretical findings, we conduct numerical simulations to evaluate the performance of OMD and OFTRL. These simulations validate the practical efficacy of our algorithms and provide empirical support for our theoretical claims. 

\section{Related Works}
\paragraph{Learning in games with non-negative regrets and zero-sum games}
For a constrained zero-sum game with a unique Nash equilibrium, \cite{daskalakis2019last} gives an asymptotic last-iterate convergence for optimistic multiplicative weight update. This result is then improved by \cite{weilinear20} to a linear last-iterate convergence rate. However, their result is problem-dependent on a condition number like quantity. Follow-up works then investigated optimistic learning algorithms with vanishing learning rates \cite{mertikopoulos2019optimistic}, and with different variants of zero-sum games such as with a Markovian setting \cite{daskalakis2020independent,wei2021last,yangt22,leonardosglobal22}. Some of these results are then generalized in \cite{anagnostides2022last}, which proposed the notion of games with nonnegative regret that includes many classes of zero-sum games. 

\paragraph{Learning in harmonic games}
When the actions are not constrained to the probability simplex (differential games), \cite{letcher2019differentiable} showed that the gradient dynamic is insufficient to find any stable point of the Hamiltonian game (the analog of harmonic games in differential games). Leveraging the harmonic/potential decomposition, they proposed a Symplectic Gradient Adjustment (SGA) method to find stable points in differential games. In normal-form games, where the actions are confined to the probability simplex, \cite{legacci2024geometric} showed the first negative result of the classic no-regret learning. They studied the dynamic induced by the exponential weight algorithm, and showed that the dynamic can be recurrent in harmonic games. This result is later extended by \cite{legacci2024no}, which showed that the classic followed the regularized leader method (of which the exponential weight algorithm is an instance of), is recurrent in harmonic game. However, they showed that by extrapolating the gradients, which includes the variant of optimistic follow the regularized leader, the induced dynamic can converge to the set of Nash equilibrium asymptotically. 

\section{Preliminaries}
\paragraph{Notation}
Throughout the paper, we use $\|\cdot\|$ to denote an ambient norm on $\mathbb{R}^{d}$ and use $\|\cdot\|_\ast$ to denote the corresponding dual norm. We use $\Delta(\cdot)$ to denote the probability simplex on a finite set, i.e. $\Delta(A_i) = \{x \in \mathbb{R}^{|A_i|}_{\geq 0}: \sum_{a_i \in A_i} x(a_i) = 1\}$. 

In this paper, we consider finite normal-form games (NFGs).

\subsection{Normal-Form Game}
A normal-form game $\Gamma = (n, A, u)$ consists of $n$ players. Each player $i$ has a set of feasible actions $A_i$, with the joint action space denoted as $A = \prod_{i \in [n]} A_i$. Players can adopt randomized strategies $x_i \in \Delta(A_i)$, where $x_i(a_i)$ represents the probability of selecting action $a_i$. The joint action of all players is denoted as $a = (a_1, \ldots, a_n)$, while the joint randomized strategy is represented as $x = (x_1, \ldots, x_n)$. 

Each player has a utility function $u_i : A \rightarrow [-1, 1]$, where the range is restricted to $[-1, 1]$ for simplicity. Under a joint randomized strategy $x = (x_1, \ldots, x_n)$ with $x_i \in \Delta(A_i)$, the expected utility for player $i$ is given by $u_i(x) = \mathbb{E}_{a\sim x}[u_i(a)] = \sum_{a_i \in A_i} u_i (a_i, x_{-i}) x_i(a_i)$, where $x_{-i}$ denotes the joint strategy of all players except player $i$.

The payoff field for an individual player $i$ is defined as $v_i(x) = (u_i(a_i, x_{-i}))_{a_i \in A_i}$, and the overall game's payoff field is $v(x) = (v_1(x), \ldots, v_n(x))$. With this, the utility $u_i$ can then be expressed as $u_i(x) = \langle v_i(x), x_i \rangle$.

\subsection{Solution Concepts}
A popular solution concept of the normal-form game is Nash equilibrium, which is defined as the following.
\begin{defn}[Nash equilibrium]
    A strategy $x^\ast = (x^\ast_1, \ldots, x^\ast_n)\in\Delta(A)$ is called a \textbf{Nash equilibrium} if , for all players $i \in [n]$, it holds that $$u_i(x_i^\ast, x_{-i}^\ast) \geq  u_i(x_i, x_{-i}^\ast) \,, \quad \forall x_i \in \Delta(\mathcal{A}_i)\,.$$
\end{defn}
Equivalently, the Nash equilibrium can be characterized in terms of the payoff field. Specifically, if $x^\ast$ is a Nash equilibrium, then
\begin{align*}
    \langle v(x^\ast), x - x^\ast \rangle \leq 0 \,, \forall x \in \Delta(A) \,.
\end{align*}

In learning algorithms, agents iteratively approach the Nash equilibrium by progressively refining their approximations. The concept of an approximate Nash equilibrium is defined as follows.
\begin{defn}[$\epsilon$-approximate Nash equilibrium]
    A strategy $x^\ast = (x^\ast_1, \ldots, x^\ast_n)\in\Delta(A)$ is called an \textbf{$\epsilon$-approximate Nash equilibrium} if, for all players $i \in [n]$, it satisfies $$u_i(x_i^\ast, x_{-i}^\ast) \geq  u_i(x_i, x_{-i}^\ast) - \epsilon\,, \quad \forall x_i \in \Delta(\mathcal{A}_i)\,.$$
\end{defn}

\subsection{Harmonic Games}
The harmonic game is a class of normal-form games designed to model scenarios where players' interests are inherently anti-aligned \cite{candogan2011flows}. This is in direct contrast to potential games, which capture situations where players' interests are aligned and their collective behavior can be described by a single global potential function \cite{monderer1996potential}. It has been shown in \cite{candogan2011flows,abdou2022decomposition} that every normal-form game can be uniquely decomposed into a direct sum of a potential game and a harmonic game. This decomposition provides a structured way to analyze games by separating the aligned and anti-aligned components of players' utilities. The decomposition is unique up to affine transformations of the utility functions, meaning that any linear scaling or shifting of the utilities does not affect the equilibrium strategies. 

Following the definition of harmonic games from previous work
\begin{defn}[\cite{abdou2022decomposition,legacci2024no}]
    A Norm-form game $\Gamma = (n, A, u)$ is said to be a harmonic game if there exists a collection of weights $\mu_{i,a_i} \in (0, \infty)$, $a_i \in A_i$, $i \in [n]$ such that
    \begin{align*}
        \sum_{i \in [n]}\sum_{b_i \in A_i} \mu_{i,a_i} \left(u_i(a) - u_i(b_i, a_{-i})\right) = 0\,,
    \end{align*}
    for all of $a \in A$.
\end{defn}

Conceptually, the players' interests in a harmonic game are fundamentally anti-aligned. Specifically, if any player considers deviating toward a particular action, there will always exist other players with an incentive to deviate away from the resulting strategy profile.

\section{No-regret Learning}
In classic online learning, an important measure to evaluate the performance of an algorithm is the notion of regret. Regret quantifies the cumulative difference between the utility achieved by the algorithm's chosen actions and the maximum utility that could have been obtained using a fixed action selected in hindsight. Although the original concept of regret was developed in the context of single-player learning, it accommodates scenarios where the utility function varies over time in the most adversarial manner.

In the multi-player learning setting, the utility function remains time-varying from an individual player's perspective, as it is influenced by the evolving strategies of other players. However, regret can be more favorable in this setting compared to single-player scenarios, as the utilities may not vary in the worst case when all players follow the same algorithm.

\begin{defn}
    The regret of player $i$ is defined as
    \begin{align*}
        \mathrm{Reg}_i^T = \max_{x^\ast_i\in\Delta(A_i)} \left\{\sum^T_{t=1} \left\langle x^\ast_i, v_i^t \right\rangle\right\} - \sum^T_{t=1} \left\langle x^t_i, v_i^t\right\rangle \,,
    \end{align*}
    where $v_i^t=v_i(x^t)$ is the payoff vector to player $i$ when players pay the sequence of strategy $x^t$.
\end{defn}

An important class of algorithms that achieves a sum of $O(1)$ regret between players is the optimistic online learning algorithms \cite{syrgkanis2015fast}. This includes the Optimistic Mirror Descent (OMD) and the Optimistic Follow the Regularized Leader (OFTRL) algorithms. Unlike the classic Online Mirror Descent or Online Follow the Regularized Leader algorithms, the optimistic variants take advantage of the fact that utilities may not vary in the worst possible way, allowing them to learn the predictable sequence of changing utilities. As a result, these variants enjoy a regret bound that is influenced by the variation in utility (RVU), a property that has been identified as crucial for achieving small regret in game-theoretic settings \cite{rakhlin2013online,syrgkanis2015fast}. 
\paragraph{OMD}
Formally, the optimistic mirror descent takes the following steps to select the strategies. We fix a player $i$ for the update steps for clearer presentation. Let $D_{R_i}$ denote the Bregman divergence with regularizer $R_i$, which we assume to be $1$-strongly convex with respect to $\|\cdot \|$. Define $x_i^0 = g_i^0 = \argmin_{x_i \in \Delta(A_i)} R_i(x_i)$.
\begin{align}\label{eq:omd}
    x_i^{t+1} = \ & \argmax_{x_i \in \Delta(A_i)} \eta_i^{t+1} \langle x_i, v_i^t \rangle - D_{R_i}(x_i, g_i^t) \,, \\
    v_i^{t+1} = \ & v_i(x^{t+1}) \nonumber\\
    g_i^{t+1} = \ & \argmax_{g_i \in \Delta(A_i)} \eta_i^{t+1} \langle g_i, v_i^{t+1} \rangle - D_{R_i}(g_i, g_i^t) \nonumber \,.
\end{align}

\paragraph{OFTRL}
Define $\hat{x}_i^0 = \argmin_{x_i \in \Delta(A_i)} R_i(x_i)$.
\begin{align}\label{eq:oftrl}
    \hat{x}_i^{t+1} = \ & \argmax_{x_i\in\Delta(A_i)} \eta_i \left\langle x_i, \hat{v}_i^t + \sum^t_{s=1} \hat{v}_i^s\right\rangle  - R_i(x_i) \\
    \hat{v}_i^{t+1} = \ & v_i(\hat{x}^{t+1}) \nonumber\,.
\end{align}

Beyond achieving $O(1)$ total regret, these optimistic algorithms have also been shown to converge quickly to the optimal welfare in smooth games \cite{syrgkanis2015fast}. When the total regret of the game is non-negative, i.e., $\sum_{i \in [n]} \mathrm{Reg}_i^T \geq 0$, these algorithms can also find a $\epsilon$-approximate Nash equilibrium after a sufficient number of iterations \cite{anagnostides2022last}. 

\section{Games with Non-negative Weighted Regrets}
It has been demonstrated that the class of games with non-negative regrets encompasses important variants of zero-sum games, including two-player zero-sum games and polymatrix zero-sum games. Conceptually, these zero-sum games share similarities with harmonic games, as both involve naturally conflicting interests between players. However, the standard notion of non-negative regret does not fully capture the class of harmonic games. To address this gap, we introduce the concept of weighted regret, which provides a bridge between zero-sum games and harmonic games. 

We further defined the weighted regret of all players as
\begin{defn}
    The weighted regret is defined as $\mathrm{mReg} = \sum^n_{i=1} \mathrm{mReg}_i^T$, where
    \begin{align*}
        \mathrm{mReg}_i^T = \max_{x^\ast \in \Delta(A_i)} \sum^T_{t=1} m_i \langle x^\ast - x_i^t, v_i^t \rangle \,,
    \end{align*}
    $m_i$ is a non-zero weight and $v_i^t=v_i(x^t)$.
\end{defn}
A game is with non-negative weighted regret if the weighted sum over all players' regret is non-negative.
\begin{wrapfigure}{r}{6cm}
\centering
\includegraphics[width=\linewidth]{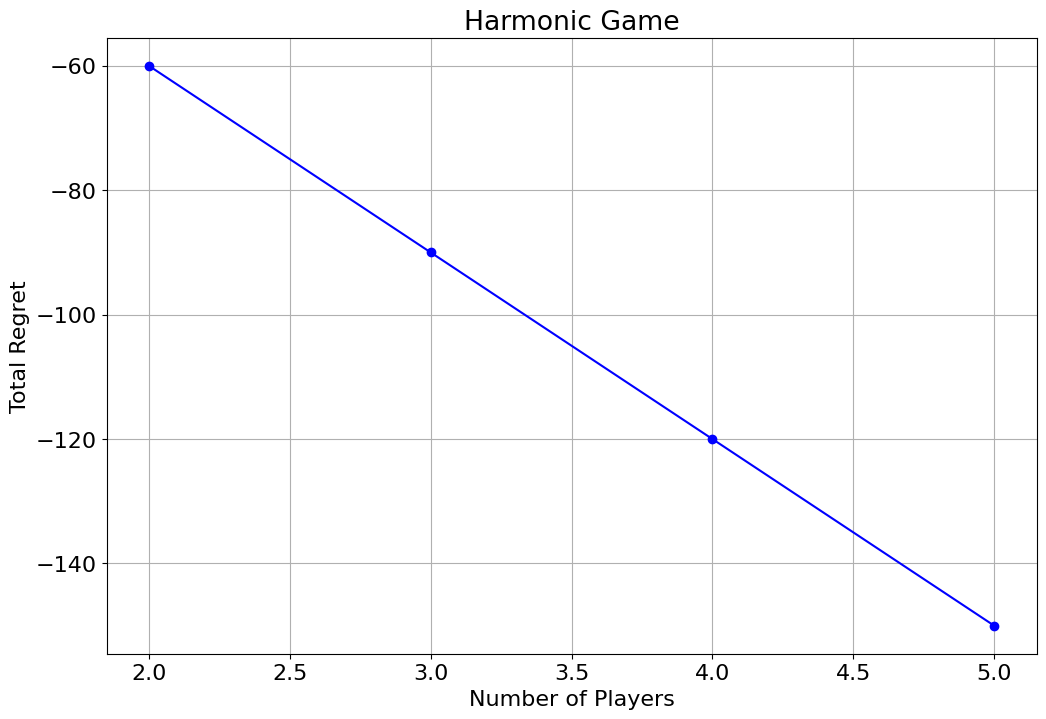}
    \caption{Total regret plot in a Harmonic game where the utility is the collective sum of action minus the individual action. Each point represents the total regret after $10$ rounds. }
    \label{fig1}
    \vspace{-1cm}
\end{wrapfigure}
\begin{defn}
    A game has non-negative weight regret, if 
 $\exists m\in \mathcal{R}^{n}_{++}$,
    \begin{align*}
        \sum^n_{i=1} \mathrm{mReg}_i^T \geq 0 \,, \quad \forall\{x^t\}^T_{t=1}\in \Delta(A) \,, T \geq 1 \,.
    \end{align*}
\end{defn}

It is obvious that if a game has non-negative regret, then it has non-negative weighted regret (just let $m=(1,\cdots,1)$). However, the reverse is not true. Figure \ref{fig1} specifies a harmonic game instance, whose total regret is negative and by Lemma 5.1 its weighted regret is non-negative. So our framework naturally includes a broader class of games. Having established the definition of non-negative weighted regret, we can now explore its implications for specific classes of games. 

\begin{restatable}{lem}{harmonic}\label{thm:harmonic}
    Harmonic games have non-negative weighted regrets.
\end{restatable}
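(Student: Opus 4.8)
The plan is to exhibit an explicit vector of positive weights $m \in \mathcal{R}^n_{++}$ together with a fixed comparator profile $x^\ast = (x_1^\ast, \ldots, x_n^\ast)$ for which the weighted regret is bounded below by zero for every sequence of play. The starting point is that, since each $m_i > 0$, the weight factors out of the maximum defining $\mathrm{mReg}_i^T$, and replacing the maximizer by our fixed comparator $x_i^\ast$ can only decrease the value. Summing over players gives
\begin{align*}
\sum_{i=1}^n \mathrm{mReg}_i^T \;\geq\; \sum_{t=1}^T \sum_{i=1}^n m_i \langle x_i^\ast - x_i^t, v_i^t \rangle \;=\; \sum_{t=1}^T \sum_{i=1}^n m_i\left(u_i(x_i^\ast, x_{-i}^t) - u_i(x^t)\right),
\end{align*}
using $\langle v_i^t, x_i^t\rangle = u_i(x^t)$ and $\langle v_i^t, x_i^\ast\rangle = u_i(x_i^\ast, x_{-i}^t)$. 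Thus it suffices to find $m$ and $x^\ast$ so that the multilinear function $F(x) := \sum_{i=1}^n m_i\left(u_i(x_i^\ast, x_{-i}) - u_i(x)\right)$ is non-negative on $\Delta(A)$.

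First I would read off the weights and the comparator directly from the harmonic structure. Setting $m_i := \sum_{b_i \in A_i} \mu_{i,b_i}$, which is strictly positive since every weight is positive, and letting $x_i^\ast(b_i) := \mu_{i,b_i}/m_i$ define a (fully mixed) element of $\Delta(A_i)$, I would regroup the defining identity of a harmonic game by attaching each weight to its deviation action. Using $\sum_{b_i}\mu_{i,b_i}\, u_i(b_i, a_{-i}) = m_i\, u_i(x_i^\ast, a_{-i})$ and $\sum_{b_i}\mu_{i,b_i}\, u_i(a) = m_i\, u_i(a)$, the harmonic identity rearranges into
\begin{align*}
\sum_{i=1}^n m_i\left(u_i(a) - u_i(x_i^\ast, a_{-i})\right) = 0 \qquad \text{for every pure profile } a \in A,
\end{align*}
that is, $F(a) = 0$ at every vertex of $\Delta(A)$.

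The final step is to upgrade this vertex identity to an inequality on the whole simplex. The function $F$ is multi-affine: holding all blocks $x_k$ but one fixed, both $u_i(x)$ and $u_i(x_i^\ast, x_{-i})$ are affine in the remaining block. A multi-affine function on the product of simplices $\prod_k \Delta(A_k)$ attains its minimum at an extreme point, i.e. at a pure profile; since $F$ vanishes at every pure profile, $\min_{x} F(x) = 0$ and hence $F \geq 0$ everywhere (indeed $F \equiv 0$). Applying $F \geq 0$ termwise with $x = x^t$ in the displayed lower bound yields $\sum_i \mathrm{mReg}_i^T \geq 0$ for all sequences and all $T$, which is the claim.

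I expect the main obstacle to be the bookkeeping in the second step: matching the per-action harmonic weights $\mu_{i,b_i}$ to a single per-player regret weight $m_i$ together with the mixed comparator $x_i^\ast$, so that the harmonic identity collapses exactly into the weighted-deviation form $F$. Once the identification $m_i = \sum_{b_i}\mu_{i,b_i}$ and $x_i^\ast \propto \mu_{i,\cdot}$ is made, the reduction to vertices via multi-affinity and the concluding inequality are routine; note in particular that this argument never requires verifying that $x^\ast$ is itself a Nash equilibrium.
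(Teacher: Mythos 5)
Your proof is correct and takes essentially the same route as the paper's: you pick the identical weights $m_i=\sum_{b_i\in A_i}\mu_{i,b_i}$ and the identical fully mixed comparator $x_i^\ast=\mu_i/m_i$, lower-bound $\mathrm{mReg}_i^T$ by this fixed comparator, and reduce everything to the identity $\sum_i m_i\left(u_i(x)-u_i(x_i^\ast,x_{-i})\right)=0$ on all of $\Delta(A)$. Your extreme-point/multi-affinity step is merely a repackaging of the paper's explicit computation (multiplying the harmonic identity by $x_a$ and summing over $a\in A$, i.e.\ taking the multilinear extension), and both arguments conclude with $F\equiv 0$, so there is no gap.
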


Since weighted regret is a natural extension of the concept of non-negative regret, the class of games that exhibit non-negative regret will also inherently have non-negative weighted regret. This provides a broader framework for analyzing a variety of game types within this class. 
\begin{lem}[Extension of Proposition 3.2 of \cite{anagnostides2022last}]
    The following games also have non-negative weighted regret. 1) Two-player zero-sum games; 2) Polymatrix zero-sum games; 3) Constant-sum Polymatrix games; 4) Strategically zero-sum games; 5) Polymatrix strategically zero-sum games;
\end{lem}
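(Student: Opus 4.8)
The plan is to reduce the claim for each listed class to the non-negative total-regret argument for (weighted) zero-sum games that was sketched above, and which underlies Proposition 3.2 of \cite{anagnostides2022last}. Two elementary observations drive the reduction. First, carrying a positive weight $m_i$ through the definition of $\mathrm{mReg}_i^T$ is the same as running the ordinary-regret argument on the rescaled game with utilities $\tilde u_i = m_i u_i$, since $m_i\langle x_i^\ast - x_i^t, v_i^t\rangle = \langle x_i^\ast - x_i^t, m_i v_i^t\rangle$. Second, the per-round quantity $\langle x_i^\ast - x_i^t, v_i^t\rangle$ is invariant under adding to $u_i$ any term depending only on the opponents' actions $x_{-i}$: such a term shifts the payoff field $v_i^t$ by a multiple of $\mathbf{1}$, and $\langle x_i^\ast - x_i^t, c\,\mathbf{1}\rangle = c(1-1) = 0$ because both $x_i^\ast$ and $x_i^t$ are distributions (pure additive constants are the special case).

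I would first dispatch the genuinely zero-sum classes, two-player zero-sum and polymatrix zero-sum, taking $m=\mathbf{1}$. Here $\sum_i u_i(x)\equiv 0$, so the realized-utility diagonal terms $\sum_t\sum_i \langle x_i^t, v_i^t\rangle = \sum_t\sum_i u_i(x^t)$ vanish identically, and it remains to lower-bound each $\max_{x_i^\ast}\sum_t u_i(x_i^\ast, x_{-i}^t)$ by its value at the time-average comparator $\bar x_i = \tfrac1T\sum_t x_i^t$. Summing over players and using the pairwise anti-symmetry $A^{ij} = -(A^{ji})^\top$ of the polymatrix payoffs, the resulting quadratic form $\sum_{i\neq j}\bar x_i^\top A^{ij}\bar x_j$ cancels in conjugate pairs and equals zero, so the total regret is $\geq 0$; the two-player case is the single-pair instance, reproducing the $\max$/$\min$ duality-gap cancellation.

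Constant-sum polymatrix games then reduce immediately: if $\sum_i u_i\equiv c$, subtracting any constants $c_i$ with $\sum_i c_i = c$ from the $u_i$ leaves every $\mathrm{Reg}_i^T$ unchanged (second observation) and makes the game zero-sum. The substantive work is the strategically zero-sum classes (4) and (5). For a strategically zero-sum bimatrix game the defining property supplies positive constants $c_1,c_2$ and vectors $u,w$ with $c_1 A + c_2 B = \mathbf{1}u^\top + w\mathbf{1}^\top$; I would take $m=(c_1,c_2)$, pass to the rescaled matrices $c_1 A$ and $c_2 B$, and strip the strategically irrelevant pieces by subtracting the opponent-only term $\mathbf{1}u^\top$ from player $1$ and $w\mathbf{1}^\top$ from player $2$. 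Neither subtraction changes the corresponding regret, and the reduced matrices become $\hat A = c_1 A - \mathbf{1}u^\top$ and $\hat B = -\hat A$, a genuine zero-sum game; non-negativity is then inherited from the $m=\mathbf{1}$ argument. The polymatrix strategically zero-sum variant is handled pairwise in exactly the same fashion.

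I expect the main obstacle to be the bookkeeping in steps (4)–(5): one must check that each additive correction supplied by the strategic-equivalence definition is indeed opponent-only from the viewpoint of the player whose utility it modifies, so that the invariance observation legitimately applies, and that the positive weights extracted from that definition are mutually consistent across all pairwise interactions in the polymatrix case. Once the reduction to a genuine weighted zero-sum game is in place, no further estimates are required.
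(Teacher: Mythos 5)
Your proposal is correct, but it takes a genuinely different and much longer route than the paper. The paper's entire proof is one line: take $m_i = 1$ for every player and invoke Proposition 3.2 of \cite{anagnostides2022last}, which already establishes non-negative \emph{unweighted} regret for all five classes; non-negative weighted regret then follows because unit weights are admissible. You instead re-derive the content of that cited proposition from scratch: the time-average comparator argument with the zero-sum cancellation $\sum_i u_i(\bar x)=0$ for the (poly)matrix zero-sum classes, the opponent-only-shift invariance for constant-sum games, and the Moulin--Vial decomposition $c_1 A + c_2 B = \mathbf{1}u^\top + w\mathbf{1}^\top$ with weights $m=(c_1,c_2)$ for the strategically zero-sum classes. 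What your approach buys is self-containedness and a nice illustration of why the weighted notion is natural: the weights absorb the positive rescaling constants supplied by strategic equivalence directly, rather than routing through the unweighted result, and your observation that $m_i\langle x_i^\ast - x_i^t, v_i^t\rangle = \langle x_i^\ast - x_i^t, m_i v_i^t\rangle$ is exactly the reduction the paper's framework is built on. What the paper's approach buys is brevity and the slightly stronger intermediate fact (unit weights suffice even for the strategically zero-sum classes, as shown in the cited work). Two small points to tidy in your write-up: for polymatrix zero-sum games you invoke pairwise anti-symmetry $A^{ij} = -(A^{ji})^\top$, but that class is defined by the global condition $\sum_i u_i(a) = 0$, which does not imply pairwise anti-symmetry; fortunately your cancellation only needs $\sum_{i\neq j}\bar x_i^\top A^{ij}\bar x_j = \sum_i u_i(\bar x) = 0$, which follows from the global condition by multilinearity, so the argument survives with the weaker (and correct) justification. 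And the quadratic form you write appears after using $\sum_t x_j^t = T\bar x_j$, so the first term equals $T\sum_i u_i(\bar x)$; stating that step explicitly would make clear why the time-average trick works for polymatrix (but not general $n$-player) games.
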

\begin{proof}
    Take $m_i = 1$ and the rest follow from Proposition 3.2 of \cite{anagnostides2022last}.
\end{proof}

It is worth noticing that the framework of harmonic game includes some classes two-player zero-sum games. Specifically, a two-player zero-sum game with an interior (fully randomized) Nash equilibrium $x^\ast$ is harmonic with the weights $\mu = x^\ast$. However, there are zero-sum games that are also potential, and have strict (not randomized) equilibrium, which are clearly not captured by the framework of harmonic game. 
\section{Convergence in Games with Non-negative Weighted Regret}
Having established that harmonic games, along with other classes of games, enjoy nonnegative weighted regret, we now study the performance of optimistic learning algorithms in these games.

We first focus on optimistic mirror descent, the following theorem formalizes the conditions under which OMD guarantees convergence to an $\epsilon$-approximate Nash equilibrium in games with non-negative weight regret. 
\begin{restatable}{thm}{omd}\label{thm:omd}
    If each player employs OMD with 
    \begin{itemize}
        \item a pair of norms such that $\|x\|\geq c\|x\|_1$, $\|x\|_\ast \leq c_\ast \|x\|_\infty$ for some constant $c, c_\ast$ and for any $x$,
        \item $G_i$-smooth regularizer $R_i$,
        \item non-increasing learning rate with $\eta^1 \leq \frac{c}{4c_\ast (n-1)} \sqrt{\frac{\underline{m}}{\Bar{m}}}$ , where $\eta^1 = \max_i \eta_i^1$ ,  $\underline{m} = \min m_i$ , $\Bar{m} = \max m_i$ and $\eta_i^t\geq \eta_i > 0$.
    \end{itemize}
    Then if the game has non-negative weight regret, for any $\epsilon > 0$, after $T > \frac{1}{\epsilon^2 } \sum^n_{i=1}\frac{8\Bar{R}_i m_i \eta^1}{\eta_i \underline{m}}$ iterations, there exists an iterate $x^t$ that is an $\epsilon \cdot \left(c_\ast + 2 \max_{i\in[n]} \left\{\frac{G_i\cdot\Omega_i}{\eta_i}\right\}\right)$-approximate Nash Equilibrium, where $\Omega_i=\sup_{x,y\in\Delta\left(A_i\right)}\|x-y\| \,.$
\end{restatable}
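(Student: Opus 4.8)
The plan is to combine the regret-bounded-by-variation-in-utilities (RVU) property of OMD with the non-negative weighted regret hypothesis to show that the \emph{total} squared movement of the iterates is bounded by a constant independent of $T$; a pigeonhole argument then produces a nearly stationary iterate, and stationarity of the OMD proximal step is finally translated into the approximate Nash guarantee.

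First I would invoke the standard RVU bound for OMD (as in \cite{syrgkanis2015fast}): since each $R_i$ is $1$-strongly convex with respect to $\|\cdot\|$, player $i$'s regret satisfies an inequality of the form
\begin{align*}
\mathrm{Reg}_i^T \leq \frac{\bar{R}_i}{\eta_i} + \sum_{t=1}^T \eta_i^t \|v_i^t - v_i^{t-1}\|_\ast^2 - \frac{1}{4}\sum_{t=1}^T \frac{1}{\eta_i^t}\|x_i^{t} - x_i^{t-1}\|^2 \,,
\end{align*}
where I have used the non-increasing rates to replace $\eta_i^T$ by its floor $\eta_i$ in the leading term. Multiplying by $m_i$ and summing produces a bound on $\mathrm{mReg} = \sum_i \mathrm{mReg}_i^T$. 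The next ingredient is a \emph{variation bound}: because $v_i(x)$ is multilinear in $x_{-i}$ and $u_i$ takes values in $[-1,1]$, the change in player $i$'s payoff field is controlled by the opponents' movement, namely $\|v_i^t - v_i^{t-1}\|_\ast \leq c_\ast\|v_i^t - v_i^{t-1}\|_\infty \leq \frac{c_\ast}{c}\sum_{j\neq i}\|x_j^t - x_j^{t-1}\|$, using both norm-comparison hypotheses. Squaring via Cauchy--Schwarz introduces a factor $(n-1)$ and rewrites the positive variation terms as a weighted sum of opponents' squared increments.

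The crux is a bookkeeping step: after exchanging the order of summation, the coefficient multiplying $\|x_k^t - x_k^{t-1}\|^2$ among the positive terms is $\frac{c_\ast^2}{c^2}(n-1)\sum_{i\neq k} m_i\eta_i^t \leq \frac{c_\ast^2}{c^2}(n-1)^2\bar{m}\,\eta^1$, while its coefficient in the negative term is at least $\frac{m_k}{4\eta^1}\geq\frac{\underline{m}}{4\eta^1}$. The learning-rate condition $\eta^1\leq \frac{c}{4c_\ast(n-1)}\sqrt{\underline{m}/\bar{m}}$ is precisely calibrated so the former is at most half the latter; the positive terms are therefore absorbed, and combining with the non-negative weighted regret hypothesis $\mathrm{mReg}\geq 0$ leaves
\begin{align*}
0 \leq \mathrm{mReg} \leq \sum_i\frac{m_i\bar{R}_i}{\eta_i} - \frac{1}{8\eta^1}\sum_{i}m_i\sum_{t=1}^T\|x_i^t - x_i^{t-1}\|^2 \,.
\end{align*}
Rearranging bounds the total movement by $8\eta^1\sum_i m_i\bar{R}_i/\eta_i$, a constant independent of $T$. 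A pigeonhole over the $T$ rounds yields an iterate with $\sum_i m_i\|x_i^{t+1}-x_i^t\|^2 \leq \frac{8\eta^1}{T}\sum_i m_i\bar{R}_i/\eta_i$; using $m_i\geq\underline{m}$ and the stated iteration count $T$ then forces $\|x_i^{t+1}-x_i^t\|\leq\epsilon$ for every player $i$.

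Finally I would convert near-stationarity into the approximate Nash bound. Writing the first-order optimality condition of the OMD step defining $x_i^{t+1}$ and decomposing $\langle v_i^t, x_i^\ast - x_i^t\rangle = \langle v_i^t, x_i^\ast - x_i^{t+1}\rangle + \langle v_i^t, x_i^{t+1}-x_i^t\rangle$, the second term is at most $c_\ast\|x_i^{t+1}-x_i^t\|\leq c_\ast\epsilon$ (using $\|v_i^t\|_\ast\leq c_\ast$), while $G_i$-smoothness of $R_i$ controls the first by $\frac{G_i\Omega_i}{\eta_i}\|x_i^{t+1}-g_i^t\|$; since both increments are $O(\epsilon)$ at the selected iterate this assembles into the claimed $\epsilon\cdot(c_\ast + 2\max_i G_i\Omega_i/\eta_i)$ factor. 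The main obstacle I anticipate is exactly this last step: relating the auxiliary sequence $g_i^t$ to $x_i^t$ so that the optimality condition genuinely certifies a best-response gap, and propagating the constants $c$, $c_\ast$, $G_i$, $\Omega_i$ consistently through both the absorption inequality and the stationarity conversion so that the final approximation factor matches exactly.
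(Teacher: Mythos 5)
Your overall architecture is the paper's: an RVU-type regret bound for OMD, multiplied by the weights $m_i$, a variation bound $\|v_i^t-v_i^{t-1}\|_\infty\le\sum_{j\ne i}\|x_j^t-x_j^{t-1}\|_1$, absorption of the positive variation terms under the stated learning-rate condition, the hypothesis $\sum_i\mathrm{mReg}_i^T\ge 0$ to bound total squared movement by a $T$-independent constant, and a pigeonhole to extract a good iterate. Up to that point your bookkeeping (including the threshold $T>\frac{1}{\epsilon^2}\sum_i\frac{8\bar R_i m_i\eta^1}{\eta_i\underline m}$) is sound.

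There is, however, a genuine gap at the final step, and it is precisely the obstacle you flagged. By stating the RVU bound with negative term $-\frac{1}{4}\sum_t\frac{1}{\eta_i^t}\|x_i^t-x_i^{t-1}\|^2$, you have discarded the quantities that actually certify the equilibrium gap. The pigeonhole then gives you only $\|x_i^{t+1}-x_i^t\|\le\epsilon$ for all $i$ at one iterate, but the first-order optimality condition of the OMD step runs through the secondary sequence: it bounds $\langle v_i^t, z_i - x_i^{t+1}\rangle$ by $\frac{G_i\Omega_i}{\eta_i}\|x_i^{t+1}-g_i^t\|$, and your argument never controls $\|x_i^{t+1}-g_i^t\|$ (or $\|x_i^t-g_i^t\|$). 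The triangle inequality $\|x_i^t-x_i^{t-1}\|^2\le 2\|x_i^t-g_i^{t-1}\|^2+2\|x_i^{t-1}-g_i^{t-1}\|^2$ goes the wrong way: small consecutive movement of the played iterates does not imply proximity to the $g$-sequence, so your assertion that ``both increments are $O(\epsilon)$ at the selected iterate'' is unproven. The paper avoids this by keeping \emph{both} $\|g_i^t-x_i^t\|^2$ and $\|x_i^t-g_i^{t-1}\|^2$ as negative terms throughout Lemmas A.1--A.3, so the pigeonhole iterate directly yields $\|x_i^t-g_i^t\|\le\epsilon$ and $\|g_i^t-g_i^{t-1}\|\le 2\epsilon$; the variational inequality of the $g$-update together with $G_i$-smoothness then gives $\langle v_i^t, z_i-g_i^t\rangle\le 2\epsilon\,G_i\Omega_i/\eta_i$, and $|\langle v_i^t, x_i^t-g_i^t\rangle|\le\epsilon c_\ast$ assembles the exact constant $\epsilon\bigl(c_\ast+2\max_i G_i\Omega_i/\eta_i\bigr)$. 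Your route is reparable without changing the RVU bound: since $x_i^t$ and $g_i^t$ are both prox steps from $g_i^{t-1}$ with payoff arguments $v_i^{t-1}$ and $v_i^t$, strong convexity of $R_i$ gives $\|x_i^t-g_i^t\|\le\eta_i^t\|v_i^t-v_i^{t-1}\|_\ast$, which is $O((n-1)\epsilon\, c_\ast/c)$ at your pigeonhole iterate because all players moved little there — but this introduces extra factors of $c_\ast/c$ and $(n-1)$, so the approximation constant you would obtain does not match the theorem's stated factor without further adjustment.
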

We note that the above theorem can be achieved with any initialization of the OMD algorithm. We define the initial point of OMD in Equation \ref{eq:omd} to facilitate the later analysis of equivalence between OMD and FTRL. 
Beyond the finite iterate guarantee, we show that the iterate converges to the set of Nash equilibrium. 
\begin{restatable}{thm}{omdset}\label{thm:omdset}
    Suppose $x^t$ is the sequence of strategies output by OMD and the conditions specified in Theorem \ref{thm:omd} are satisfied, then $x^t$ converges to the set of Nash equilibrium of the game. 
\end{restatable}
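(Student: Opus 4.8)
The plan is to establish set convergence by proving that \emph{every} limit point of the iterate sequence $\{x^t\}$ is a Nash equilibrium. Once that is in hand, the conclusion will follow from a soft compactness argument: the Nash set is closed (it is the solution set of the variational inequality $\langle v(x), x'-x\rangle\le 0$ with continuous $v$) and $\Delta(A)$ is compact, so if $\dist(x^t,\mathrm{NE})$ did not tend to $0$, some subsequence would stay a fixed distance away from the Nash set and, by compactness, admit a further subsequence converging to a point outside the closed Nash set — a limit point that is not an equilibrium, contradicting the claim. The engine of the whole argument will be the summability $\sum_t\|x^{t+1}-x^t\|^2<\infty$, which is exactly what the regret–variation (RVU) analysis behind Theorem \ref{thm:omd} produces.

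The first step I would carry out is to record the vanishing-movement property. Reusing the RVU-style bound from the proof of Theorem \ref{thm:omd}, I would take the $m_i$-weighted sum of the per-player regret bounds; non-negativity of the weighted regret then sandwiches the negative movement term, so that after the learning-rate condition $\eta^1\le\frac{c}{4c_\ast(n-1)}\sqrt{\underline m/\bar m}$ is used to absorb the utility-variation terms $\|v_i^{t+1}-v_i^t\|_\ast^2$ (which are controlled by $\sum_j\|x_j^{t+1}-x_j^t\|^2$ through the multilinear, hence Lipschitz, payoff field), the quantity $\sum_t\|x^{t+1}-x^t\|^2$ is finite. This yields $\|x^{t+1}-x^t\|\to 0$; applying the same control to the auxiliary sequence and using non-expansiveness of the proximal step together with $\|v_i^{t+1}-v_i^t\|_\ast\to 0$ gives $\|x^t-g^t\|\to 0$, so $x^t$ and $g^t$ share the same limit points.

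Next I would show that limit points are equilibria. Fixing a convergent subsequence $x^{t_k}\to\bar x$, the first step gives $x^{t_k+1}\to\bar x$ and $g^{t_k}\to\bar x$, while continuity of the payoff field gives $v_i^{t_k}=v_i(x^{t_k})\to v_i(\bar x)$. The maximizer defining $x_i^{t+1}$ in \eqref{eq:omd} satisfies the first-order optimality condition
\begin{equation*}
\left\langle \eta_i^{t+1} v_i^t - \bigl(\nabla R_i(x_i^{t+1}) - \nabla R_i(g_i^t)\bigr),\; x_i - x_i^{t+1}\right\rangle \le 0, \qquad \forall\, x_i \in \Delta(A_i).
\end{equation*}
Since $R_i$ is $G_i$-smooth, $\|\nabla R_i(x_i^{t_k+1})-\nabla R_i(g_i^{t_k})\|\le G_i\|x_i^{t_k+1}-g_i^{t_k}\|\to 0$, and passing to a further subsequence $\eta_i^{t_k+1}\to\eta_i^\infty\ge\eta_i>0$. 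Letting $t_k\to\infty$ and dividing by $\eta_i^\infty$ leaves $\langle v_i(\bar x), x_i-\bar x_i\rangle\le 0$ for all $x_i$ and all $i$, which is the variational characterization of a Nash equilibrium from the preliminaries; hence $\bar x$ is a Nash equilibrium.

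The main obstacle will be the first step — extracting summability of the movements from the \emph{mere} non-negativity of the weighted regret. This forces the RVU decomposition to be carried out with the player weights $m_i$ in place and the variation terms to be genuinely dominated by the negative movement terms, which is exactly where the quantitative learning-rate condition and the norm-equivalence constants $c,c_\ast$ must be tracked. A secondary technical point is the passage to the limit in the variational inequality: the $G_i$-smoothness hypothesis is what keeps $\nabla R_i$ Lipschitz and bounded up to the boundary of the simplex, so that the Bregman gradient-difference term vanishes cleanly; without it (e.g.\ for an entropic regularizer) the second step would require a separate boundary analysis.
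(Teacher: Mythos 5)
Your proposal is correct and takes essentially the same route as the paper: both arguments extract summability of the proximity terms $\sum_t\left(\|x_i^t-g_i^t\|^2+\|x_i^t-g_i^{t-1}\|^2\right)$ from the weighted RVU bound underlying Theorem \ref{thm:omd}, use the $G_i$-smoothness of $R_i$ to make the Bregman gradient-difference in the update's variational inequality vanish, and conclude via compactness of $\Delta(A)$ and continuity of the payoff field $v$. The only (immaterial) difference is bookkeeping: the paper shows every sufficiently late iterate is an $\epsilon$-approximate Nash equilibrium and derives a contradiction from a subsequence escaping an open neighborhood of the Nash set, whereas you pass the first-order optimality condition directly to the limit to show every limit point is an equilibrium --- the same argument rearranged.
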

We remark that our analysis is drastically different and more general from \cite{legacci2024no}, which allows our results to be applicable to a broader class of games with non-negative weighted regret, while theirs are only applicable to harmonic games. Our results also give a stronger theoretical guarantee when the distance between the equilibrium is non-zero, in which case the iterates of the OMD converge to a Nash equilibrium. 
\begin{defn}
    We say the set of Nash equilibrium of the game $E$ is discrete if $d>0$, where d is defined as $d = \inf_{x, y \in E, x \neq y} \|x-y\| > 0$, if $E$ at least has two points or $d=1$, if $E$ only has one point.
\end{defn}
\begin{restatable}{thm}{omdsecond}\label{thm:omd2}
    Suppose $x^t$ is the sequence of strategies generated by OMD and the conditions specified in Theorem \ref{thm:omd} are satisfied. If the set of Nash equilibrium of the game is discrete, then $x^t$ converges to a Nash equilibrium of the game. 
\end{restatable}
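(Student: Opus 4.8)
The plan is to combine the set-convergence guarantee of Theorem~\ref{thm:omdset} with a vanishing-step property of the OMD iterates, and then exploit the separation $d$ of the discrete equilibrium set $E$ through an elementary topological argument. Concretely, the proof rests on two ingredients: (i) $\dist(x^t, E) \to 0$, which is exactly the conclusion of Theorem~\ref{thm:omdset}; and (ii) $\|x^{t+1} - x^t\| \to 0$, i.e.\ the iterates slow down. Given these two facts, the discreteness of $E$ forces the nearest equilibrium to each iterate to eventually stabilize at a single point, which is then the limit.

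First I would establish the vanishing-step property (ii). This does not follow from set convergence alone, so it must be extracted from the OMD analysis underlying Theorems~\ref{thm:omd} and~\ref{thm:omdset}. The standard route is the RVU-type bound for optimistic mirror descent: because the utilities are bounded in $[-1,1]$ and the regularizers are strongly convex, the weighted-regret telescoping used to prove Theorem~\ref{thm:omd} also yields a summability estimate of the form $\sum_{t\geq 1} \|x^{t+1}-x^t\|^2 < \infty$, with constants depending only on $c, c_\ast, G_i, \eta_i$ and the weights $m_i$. Summability immediately gives $\|x^{t+1}-x^t\| \to 0$. I would therefore either cite this bound directly from the proof of Theorem~\ref{thm:omd} or reproduce the short telescoping computation that produces it.

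With both ingredients in hand, the main argument is as follows. If $E$ is a single point the claim is immediate, since $\dist(x^t, E) \to 0$ is precisely convergence to that point; so assume $E$ has at least two points and separation $d>0$. Fix $T$ large enough that $\dist(x^t, E) < d/4$ and $\|x^{t+1}-x^t\| < d/4$ for all $t \geq T$. For each such $t$, the closeness $\dist(x^t,E)<d/4$ together with the separation of $E$ guarantees a unique nearest equilibrium $e(t) \in E$: two distinct equilibria within $d/4$ of $x^t$ would lie within $d/2 < d$ of each other, contradicting discreteness. A triangle-inequality estimate then gives
\begin{align*}
\|e(t)-e(t+1)\| \leq \|e(t)-x^t\| + \|x^t-x^{t+1}\| + \|x^{t+1}-e(t+1)\| < \tfrac{3d}{4} < d,
\end{align*}
so by discreteness $e(t)=e(t+1)$. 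Hence $e(t)$ is constant for $t \geq T$, say $e(t)\equiv e^\ast \in E$, and $\|x^t - e^\ast\| = \dist(x^t,E) \to 0$, i.e.\ $x^t \to e^\ast$, a Nash equilibrium.

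The main obstacle is ingredient (ii): set convergence by itself permits the iterates to drift indefinitely among distinct equilibria, and ruling this out requires the quantitative slowdown $\|x^{t+1}-x^t\|\to 0$. I expect the cleanest path is to harvest the summability $\sum_t \|x^{t+1}-x^t\|^2 < \infty$ already implicit in the RVU bound driving Theorem~\ref{thm:omd}; care is needed because the learning rates are only bounded below ($\eta_i^t \geq \eta_i > 0$) and do not vanish, so the slowdown must come from the decaying gradient differences rather than from shrinking step sizes. Once (ii) is secured, the remaining topological step above is routine.
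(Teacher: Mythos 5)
Your proposal is correct and takes essentially the same route as the paper: it likewise combines the set convergence of Theorem~\ref{thm:omdset} with the vanishing-increment property $\|x^{t+1}-x^t\|\to 0$, harvested exactly as you anticipate from the summability bound $\sum_{t}\left(\|x_i^t-g_i^t\|^2+\|x_i^t-g_i^{t-1}\|^2\right)<\infty$ in the proof of Theorem~\ref{thm:omd}, followed by a triangle-inequality trapping argument (Lemma~\ref{lem:converge_a_ne}). The only cosmetic difference is that the paper first converts discreteness into finiteness of $E$ (Lemma~\ref{lem:ne_eq}) and traps the tail of the sequence in a fixed ball of radius $d/3$ around one equilibrium, whereas you work directly with the separation $d$ via the nearest-equilibrium map $e(t)$ and show it stabilizes --- the same argument in substance, and your version slightly streamlines it by dispensing with the finiteness reduction.
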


Compared to the convergence results for games with non-negative regrets, Theorem \ref{thm:omd2} implies pointwise convergence. It is noted in \cite{anagnostides2022last} (Remark A.15) that their technique cannot imply pointwise convergence. Even in the two-player zero-sum game, the pointwise convergence result often requires the assumption of unique Nash equilibrium \cite{daskalakis2019last} or condition number like quantity for the Nash equilibrium set \cite{weilinear20}. Under the unique Nash equilibrium and with Theorem \ref{thm:omd}, the pointwise convergence to the Nash equilibrium is apparent. This is because any convergent subsequence of $x^t$ must converge to the unique Nash equilibrium, and the convergence of $x^t$ follows as $x^t$ is bounded (If a bounded sequence's all convergent subsequences converge to the same point, then this sequence converge to this point). However, when the Nash equilibria are not unique, the subsequence of the $x^t$ can converge to different points, and hence $x^t$ may diverge. To tackle this, we first construct a family of sufficiently small open balls to cover the Nash equilibrium set by the method of functional analysis. Then we meticulously examined whether $x^t$ (for sufficiently large t) belonged to the same open ball to demonstrate the convergence of $x^t$. 

Lastly, we would like to note that compared to the previous convergence results for harmonic games \cite{legacci2024no}, our results further strengthen the convergence to a Nash equilibrium (instead of just converging to the Nash equilibrium set) when $d > 0$. The two analyses are in very different technical routes. 

Similarly, we have the following guarantee for OFTRL. 
\begin{restatable}{thm}{oftrl}\label{thm:oftrl}
    If each player employs OFTRL with
    \begin{itemize}
        \item a pair of norms such that $\|x\|\geq c\|x\|_1$, $\|x\|_\ast \leq c_\ast \|x\|_\infty$ for some constant $c, c_\ast$ and for any $x$,
        \item a $G_i$ smooth regularizer $R_i$, and $R_i$ is Legendre with domain $D_i\subseteq\Delta(A_i)$
        \item learning rate $\eta \leq \frac{c}{4c_\ast (n-1)} \sqrt{\frac{\underline{m}}{\Bar{m}}}$ , where $\eta = \max_i \eta_i$ ,  $\underline{m} = \min m_i$ , $\Bar{m} = \max m_i$.
    \end{itemize}
    Then if the game has non-negative weight regret, for any $\epsilon > 0$, after $T > \frac{1}{\epsilon^2 } \sum^n_{i=1}\frac{8\Bar{R}_i m_i \eta}{\eta_i \underline{m}} $ iterations, there exists an iterate $\hat{x}^t$ that is an $\epsilon \cdot \left(c_\ast + 2 \max_i \left\{\frac{G_i\cdot\Omega_i}{\eta_i}\right\}\right)$-approximate Nash Equilibrium, where $\Omega_i=\sup_{x,y\in\Delta\left(A_i\right)}\|x-y\|$ . 
\end{restatable}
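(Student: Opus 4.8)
The plan is to adapt the argument behind Theorem~\ref{thm:omd} to the FTRL update, since OFTRL and OMD share the same underlying regret-bounded-by-variation (RVU) mechanism. First I would invoke the standard RVU bound for OFTRL (as in \cite{syrgkanis2015fast}): with a $G_i$-smooth, Legendre regularizer $R_i$ and step size $\eta_i$, the individual regret satisfies
\begin{align*}
\mathrm{Reg}_i^T \leq \frac{\bar{R}_i}{\eta_i} + \eta_i \sum_{t=1}^{T}\|\hat v_i^t - \hat v_i^{t-1}\|_\ast^2 - \frac{1}{4\eta_i}\sum_{t=1}^{T}\|\hat x_i^{t} - \hat x_i^{t-1}\|^2 \,.
\end{align*}
The Legendre assumption with domain $D_i\subseteq\Delta(A_i)$ guarantees that each iterate lies in the relative interior and that the FTRL map is single-valued and smooth, which is exactly what makes this bound and the later first-order optimality argument valid.

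Next I would control the positive variation term by the movement of the \emph{other} players. Since the payoff field $v_i(\hat x)=(u_i(a_i,\hat x_{-i}))_{a_i}$ is multilinear in $\hat x_{-i}$ with utilities in $[-1,1]$, and using the norm relations $\|\cdot\|\geq c\|\cdot\|_1$ and $\|\cdot\|_\ast\leq c_\ast\|\cdot\|_\infty$, one gets $\|\hat v_i^t-\hat v_i^{t-1}\|_\ast \leq \frac{c_\ast}{c}\sum_{j\neq i}\|\hat x_j^t-\hat x_j^{t-1}\|$. I would then form the weighted sum $\sum_i m_i\,\mathrm{Reg}_i^T$ and apply the non-negative weighted regret hypothesis, which forces this sum to be $\geq 0$. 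Substituting the RVU bounds, regrouping the cross-player variation terms by the index $j$ of the moving player, and bounding the ratios $m_i/m_j$ by $\bar{m}/\underline{m}$, the learning-rate condition $\eta\leq \frac{c}{4c_\ast(n-1)}\sqrt{\underline{m}/\bar{m}}$ makes the aggregate positive variation term dominated by (half of) the aggregate negative stability term. This yields an inequality of the form $\sum_{t=1}^{T}\sum_{i}\frac{m_i}{8\eta_i}\|\hat x_i^{t}-\hat x_i^{t-1}\|^2 \leq \sum_i \frac{m_i\bar{R}_i}{\eta_i}$, i.e. the total squared movement is bounded \emph{independently of $T$}.

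With the total movement bounded by a constant, a pigeonhole argument over $T$ iterations produces an index $t$ at which the per-player movement $\|\hat x_i^{t}-\hat x_i^{t-1}\|$ is small; the stated threshold $T>\frac{1}{\epsilon^2}\sum_i\frac{8\bar{R}_i m_i\eta}{\eta_i\underline{m}}$ is exactly what is needed to drive this movement below $\epsilon$. Finally I would convert the small movement into an approximate-equilibrium guarantee: the first-order optimality condition of the OFTRL update at $\hat x_i^{t}$ states that $\hat x_i^{t}$ best-responds to a perturbed payoff whose perturbation is the gradient step; using $G_i$-smoothness of $R_i$ and $\Omega_i=\sup_{x,y}\|x-y\|$ to bound this perturbation, and $c_\ast$ to pass from the dual norm back to the utility gap, the deviation of any player from their best response is at most $\epsilon\cdot\left(c_\ast+2\max_i\{G_i\Omega_i/\eta_i\}\right)$, giving the claimed approximate Nash equilibrium.

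The main obstacle is the weighted cross-player cancellation in the second step: because the positive variation term for player $i$ is expressed through the movements of players $j\neq i$, summing $\sum_i m_i\eta_i\sum_{j\neq i}(\cdot)$ and matching it against the per-player negative terms $\frac{m_j}{4\eta_j}(\cdot)$ requires a careful Cauchy--Schwarz/AM--GM regrouping, and it is precisely this bookkeeping that dictates the factor $(n-1)$ and the ratio $\sqrt{\underline{m}/\bar{m}}$ in the step-size condition. The remaining difference from the OMD proof is purely in the form of the RVU bound and in invoking the Legendre property to justify the optimality/smoothness conversion.
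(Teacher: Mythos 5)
Your high-level strategy---a direct RVU analysis of OFTRL mirroring the OMD proof---is a legitimate and standard route, and your middle steps (the bound $\|\hat v_i^t-\hat v_i^{t-1}\|_\ast \leq \frac{c_\ast}{c}\sum_{j\neq i}\|\hat x_j^t-\hat x_j^{t-1}\|$, the weighted summation against non-negative weighted regret, the Cauchy--Schwarz regrouping that produces the $(n-1)$ and $\sqrt{\underline{m}/\bar{m}}$ factors, and the pigeonhole step) are all sound. But this is not the paper's route, and the paper's choice is not incidental: the paper proves by induction that, with constant learning rates, a Legendre regularizer, and the initialization $g_i^0=\argmin_{x_i\in\Delta(A_i)}R_i(x_i)$, the OFTRL iterates coincide \emph{identically} with the OMD iterates---one shows $\nabla R_i(\hat{x}_i^{t+1}) = \eta_i\left(v_i^t + \sum_{s=1}^{t} v_i^s\right) = \eta_i v_i^t + \nabla R_i(g_i^t) = \nabla R_i(x_i^{t+1})$ and invokes invertibility of $\nabla R_i$. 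This exact equivalence is the actual role of the Legendre assumption (not the interiority/smoothness role you assign it), and it lets Theorem \ref{thm:omd} be cited verbatim, constants included.

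The concrete gap in your version is the final conversion step. In the OMD proof, the stated constant $\epsilon\left(c_\ast + 2\max_i\{G_i\Omega_i/\eta_i\}\right)$ is extracted through the \emph{secondary} sequence $g_i^t$: the variational inequality for the $g$-update gives $\langle v_i^t, z_i - g_i^t\rangle \leq 2\epsilon\, G_i\Omega_i/\eta_i$, and the clean $\epsilon c_\ast$ term comes from $|\langle v_i^t, x_i^t - g_i^t\rangle| \leq \|v_i^t\|_\ast\|x_i^t - g_i^t\| \leq \epsilon c_\ast$. OFTRL has no secondary sequence, so your ``first-order optimality at $\hat x_i^t$'' must proceed from the gradient identity $\nabla R_i(\hat x_i^{t+1}) - \nabla R_i(\hat x_i^{t}) = \eta_i\left(2\hat v_i^t - \hat v_i^{t-1}\right)$, i.e. $\eta_i \hat v_i^t = \nabla R_i(\hat x_i^{t+1}) - \nabla R_i(\hat x_i^{t}) - \eta_i(\hat v_i^t - \hat v_i^{t-1})$, which yields
\begin{align*}
\langle \hat v_i^t, z_i - \hat x_i^{t+1}\rangle \leq \frac{G_i\,\Omega_i}{\eta_i}\|\hat x_i^{t+1}-\hat x_i^{t}\| + \Omega_i\,\|\hat v_i^t - \hat v_i^{t-1}\|_\ast \leq \epsilon\left(\frac{G_i\Omega_i}{\eta_i} + \frac{c_\ast\sqrt{n-1}\,\Omega_i}{c}\right)\,,
\end{align*}
since the prediction-error term can only be controlled through the opponents' movement. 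This is a qualitatively similar but quantitatively different guarantee: the term $\epsilon c_\ast$, free of $\Omega_i$, $(n-1)$, and $1/c$, is not recovered, so your argument as sketched does not establish the theorem with its stated approximation constant. Either patch the last step (the paper's equivalence reduction is the cleanest way, and it also makes the identical iteration threshold $T > \frac{1}{\epsilon^2}\sum_{i=1}^n \frac{8\bar R_i m_i \eta}{\eta_i \underline{m}}$ immediate) or restate the conclusion with the modified constant your route actually produces.
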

Similar to the OMD, when the distance between the equilibrium is non-zero, then the iterates converge to a Nash equilibrium.  
\begin{restatable}{thm}{oftrlsecond}\label{thm:oftrl2}
    Suppose $\hat{x}^t$ is the sequence of strategies output by OFTRL and the conditions specified in Theorem \ref{thm:omd} are satisfied. If the set of Nash equilibrium of the game is discrete, then $\hat{x}^t$ converges to a Nash equilibrium of the game.
\end{restatable}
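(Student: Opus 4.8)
The plan is to leverage the equivalence between OFTRL and OMD that the paper sets up (cf.\ the remark following Theorem~\ref{thm:omd}), so that Theorem~\ref{thm:oftrl2} reduces to the already-established OMD results. Under the hypotheses of Theorem~\ref{thm:oftrl} the learning rate is a fixed constant $\eta_i$ and the regularizer $R_i$ is Legendre with domain $D_i\subseteq\Delta(A_i)$; in this regime the proximal (mirror) updates in \eqref{eq:omd} telescope, and the accumulated-gradient ``leader'' of OMD coincides with the argmax defining $\hat x_i^{t+1}$ in \eqref{eq:oftrl}, with the single extra copy of $\hat v_i^t$ playing the role of the optimistic prediction. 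First I would verify this identification carefully—checking that the constant-rate specialization of the non-increasing schedule allowed in Theorem~\ref{thm:omd} matches OFTRL's $\eta_i$, that the norm and $G_i$-smoothness conditions coincide, and that the Legendre property keeps every iterate in the interior so the two argmaxes agree—to conclude that $\hat x^t$ is exactly an OMD trajectory satisfying the conditions of Theorem~\ref{thm:omd}. Theorem~\ref{thm:omdset} then gives $\dist(\hat x^t,E)\to 0$ and Theorem~\ref{thm:omd2} upgrades this to pointwise convergence to a single Nash equilibrium once $E$ is discrete.

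Should the equivalence be too delicate to invoke as a black box (e.g.\ because of boundary behaviour of the Legendre regularizer), I would instead mirror the direct argument used for Theorem~\ref{thm:omd2}; the two ingredients are the same. The first is that consecutive iterates stop moving, $\|\hat x^{t+1}-\hat x^t\|\to 0$: this falls out of the RVU-type weighted-regret inequality underlying Theorem~\ref{thm:oftrl}, since non-negativity of the weighted regret forces the negative quadratic term $\sum_t \frac{1}{\eta}\|\hat x^{t+1}-\hat x^t\|^2$ to be summable. The second is set convergence, $\dist(\hat x^t,E)\to 0$: combining $\|\hat x^{t+1}-\hat x^t\|\to 0$ with the finite-iterate guarantee of Theorem~\ref{thm:oftrl} shows that the Nash gap of $\hat x^t$ tends to zero, so every accumulation point of the bounded sequence $\hat x^t$ is an exact Nash equilibrium and hence lies in $E$.

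With these two facts in hand, the endgame is the ball-trapping argument. Using discreteness, fix $r<d/4$ and cover $E$ by the pairwise-disjoint open balls $\{B(e,r)\}_{e\in E}$, whose centers are separated by more than $d/2$. By set convergence there is a time after which $\hat x^t$ lies within distance $r$ of some equilibrium, and by $\|\hat x^{t+1}-\hat x^t\|\to 0$ there is a (larger) time after which each step is shorter than $d/2$; past this time the sequence cannot cross the gap between two distinct balls, so it is confined to a single $B(e^\ast,r)$. For such $t$ the nearest equilibrium is always $e^\ast$, whence $\dist(\hat x^t,E)=\|\hat x^t-e^\ast\|\to 0$ and therefore $\hat x^t\to e^\ast$. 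I expect the main obstacle to be exactly this trapping step—more precisely, ruling out that the iterates drift indefinitely between two equilibrium components—which requires the quantitative coupling of the vanishing step size with the separation $d$; in the equivalence route the corresponding difficulty is discharged instead into a rigorous proof that OFTRL and OMD generate the same trajectory under the Legendre hypothesis.
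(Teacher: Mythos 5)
Your primary route is exactly the paper's proof: the paper establishes $\hat{x}^t = x^t$ for all $t \geq 1$ by induction on $\nabla R_i$ (using invertibility of the gradient of the Legendre regularizer, with the constant rate $\eta_i$ as a special case of the non-increasing schedule) in the proof of Theorem~\ref{thm:oftrl}, and then Theorem~\ref{thm:oftrl2} follows immediately by invoking Theorem~\ref{thm:omd2} on the identified trajectory. Your fallback direct argument (summability of the quadratic term, vanishing Nash gap, and the ball-trapping step) is a sound reconstruction of the machinery behind Theorems~\ref{thm:omdset} and~\ref{thm:omd2} but is not needed, since the paper discharges everything through the equivalence.
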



\section{Learning with Corruptions}
One crucial assumption underlying optimistic learning algorithms is that all players adhere to the prescribed strategy, following the algorithm’s recommendations. When players deviate from the algorithm’s prescribed actions, the algorithm can no longer guarantee convergence to the equilibrium. Such deviations are not uncommon, as individual players may face external constraints or have strategic incentives that lead them to act outside the prescribed strategy.

In scenarios where players deviate from the algorithm, we refer to the learning dynamics as being "corrupted." The extent of this corruption can vary depending on the degree to which players stray from the prescribed actions. The following definition (proposed by \cite{tsuchiya2024corrupted}) provides a formal way to quantify the level of corruption present in the dynamic, allowing us to assess its impact on the convergence properties of the algorithm.

\begin{defn}
    A game is said to be a corrupted regime with corruption level $\{C_i\}_{i \in [n]}$ if the strategies committed by the player, $\Tilde{x} = \{\Tilde{x}_1, \ldots, \Tilde{x}_n\}$ deviates from the algorithm output $x = \{x_1, \ldots, x_n\}$ by at most $C_i$, i.e. $\sum^\infty_{i=1} \|\Tilde{x}_i - x_i\|_1 = C_i < \infty$, for all $i \in [n]$.
\end{defn}

In the corruption setting and under OMD or OFTRL, the strategies played by the players can be denoted as follows. 
Under OMD,  we define define $x_i^0 = g_i^0 = \argmin_{x_i \in \Delta(A_i)} R_i(x_i)$, then

\begin{align*}
\tilde{x}_i^{t+1} &= \argmax_{x_i \in \Delta(A_i)} \eta_i \langle x_i, v_i^t \rangle - D_{R_i}(x_i, g_i^t) \,, & x_i^{t+1} &= \tilde{x}_i^{t+1}+c_i^{t+1}, \\
v_i^{t+1} &= v_i(x^{t+1}), & g_i^{t+1} &= \argmax_{g_i \in \Delta(A_i)} \eta_i \langle g_i, v_i^{t+1} \rangle - D_{R_i}(g_i, g_i^t) \,.
\end{align*}

Under OFTRL, we define $y_i^0 = \argmin_{y_i \in \Delta(A_i)} R_i(y_i)$.
\begin{align*}
    \tilde{y}_i^{t+1} = \ \argmax_{y_i} \eta_i \left\langle y_i, \hat{v}_i^t + \sum^t_{s=1} \hat{v}_i^s\right\rangle  - R_i(x_i) \quad 
    y_i^{t+1}=\  \tilde{y}_i^{t+1}+c_i^{t+1}, \quad 
    \hat{v}_i^{t+1} = \  v_i(y^{t+1}) \nonumber\,.
\end{align*}

When the corruption level remains finite, we expect that the algorithm can still retain its effectiveness and guarantee convergence to an equilibrium, albeit with some modifications to the convergence rate and the quality of the equilibrium.

\begin{restatable}{thm}{omdcorrupt}\label{thm:omd_corrupt}
     If each player employs OMD under corruption with 
    \begin{itemize}
        \item a pair of norms such that $\|x\|\geq c\|x\|_1$, $\|x\|_\ast \leq c_\ast \|x\|_\infty$ for some constant $c, c_\ast$ and for any $x$,
        \item $G_i$-smooth regularizer $R_i$,
        \item non-increasing learning rate with $\eta^1 \leq \frac{c}{4(n-1)c_\ast}\cdot\sqrt{\frac{\underline{m}}{3\Bar{m}}}$ , where $\eta^1 = \max_i \eta_i^1$ ,  $\underline{m} = \min m_i$ , $\Bar{m} = \max m_i$ and $\eta_i^t\geq \eta_i > 0$.
    \end{itemize}
    Then if the game has non-negative weight regret, for any $\epsilon > 0$, after 
    \begin{align*}
        T > \ &\frac{1}{\epsilon^2 } \left\{\sum^n_{i=1}\frac{8m_i\cdot\Bar{R}_i  \eta^1}{\eta_i\cdot\underline{m} } 48 (\eta^1)^2\cdot\frac{\Bar{m}}{\underline{m}} c_\ast^2 (n-1)^2  \sum^n_{i=1}M_i\cdot C_i  8 \eta^1\cdot\frac{\Bar{m}}{\underline{m}}  \sum^n_{i=1}C_i \right\}
    \end{align*} iterations, there exists an iterate $x^t$ that is an $\max_{i\in[n]}\epsilon \cdot \left(c_\ast + 2\left\{\frac{G_i\cdot\Omega_i}{\eta_i}\right\}\right)+\|c_i^t\|_1$-approximate Nash Equilibrium, where $\Omega_i=\sup_{x,y\in\Delta\left(A_i\right)}\|x-y\|$ and $M_i=\sup_{t\geq 1} c_i^t $.
\end{restatable}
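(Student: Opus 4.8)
The plan is to reduce the corrupted dynamics to the uncorrupted analysis of Theorem~\ref{thm:omd} while tracking precisely how the deviations $c_i^t$ enter each estimate. First I would decompose the weighted regret by inserting the clean OMD iterate: writing $\langle x_i^\ast - x_i^t, v_i^t\rangle = \langle x_i^\ast - \tilde x_i^t, v_i^t\rangle - \langle c_i^t, v_i^t\rangle$ and using $\|v_i^t\|_\infty \le 1$, the deviation contributes at most $\|c_i^t\|_1$ per round, so summing over $t$ yields an additive $m_i C_i$ slack and reduces the problem to bounding the regret of the uncorrupted sequence $\tilde x_i^t$ against the (corrupted) payoff vectors $v_i^t = v_i(x^t)$. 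This latter quantity is exactly what the RVU-style bound for OMD controls, producing the leading $\Bar{R}_i/\eta_i$ term together with a positive utility-variation term $\eta_i\sum_t\|v_i^t - v_i^{t-1}\|_\ast^2$ and a negative path-length (stability) term in $\tilde x_i^t$ and $g_i^t$, exactly as in the proof of Theorem~\ref{thm:omd}.

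The crux, and what I expect to be the main obstacle, is that the payoff variation $\|v_i^t - v_i^{t-1}\|_\ast$ no longer depends only on the movement of the clean iterates. Since $x_j^t = \tilde x_j^t + c_j^t$ and each $v_i$ is multilinear with utilities in $[-1,1]$, I would bound $\|v_i^t - v_i^{t-1}\|_\infty \le \sum_{j\ne i}(\|\tilde x_j^t - \tilde x_j^{t-1}\|_1 + \|c_j^t\|_1 + \|c_j^{t-1}\|_1)$, convert to the ambient norm via $\|\cdot\|_1 \le \|\cdot\|/c$ and $\|\cdot\|_\ast \le c_\ast\|\cdot\|_\infty$, and then apply $(a+b+c)^2 \le 3(a^2+b^2+c^2)$ followed by Cauchy--Schwarz over the $n-1$ opponents. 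The clean term reproduces the path-length contribution of the uncorrupted proof but inflated by a factor of $3$, which is precisely why the admissible learning rate must tighten to $\eta^1 \le \tfrac{c}{4(n-1)c_\ast}\sqrt{\underline m/(3\Bar m)}$: the extra factor $\sqrt{1/3}$ is exactly what is needed so that, after weighting by $m_i$ and summing over players, the inflated positive variation is still dominated by the negative stability term. The residual corruption terms $\sum_t\|c_j^t\|_1^2$ I would bound by $M_j C_j$ (using $\|c_j^t\|_1 \le M_j$ and $\sum_t\|c_j^t\|_1 = C_j$, up to the norm-equivalence constant), which produces the $48(\eta^1)^2\tfrac{\Bar m}{\underline m}c_\ast^2(n-1)^2\sum_i M_i C_i$ contribution; the lower-order deviation term from the first paragraph gives the $\sum_i C_i$ contribution.

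Combining these estimates and invoking the non-negative weighted regret hypothesis $\sum_i \mathrm{mReg}_i^T \ge 0$, I would rearrange to conclude that the weighted sum of squared path lengths $\sum_i \tfrac{m_i}{\eta_i}\sum_{t\le T}\|\tilde x_i^{t+1}-\tilde x_i^t\|^2$ is bounded by a finite quantity of the stated form. Since $m_i/\eta_i \ge \underline m/\eta^1$, averaging over $t\le T$ then forces the existence of an iterate $t^\ast$ at which $\sum_i\|\tilde x_i^{t^\ast+1}-\tilde x_i^{t^\ast}\|^2 \le \epsilon^2$, and the explicit threshold on $T$ is exactly this finite bound divided by $\epsilon^2$. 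To finish, I would convert small path length into an approximate-equilibrium certificate for $\tilde x^{t^\ast}$ using the first-order optimality of the OMD step together with the $G_i$-smoothness of $R_i$, precisely as in Theorem~\ref{thm:omd}, which yields the factor $c_\ast + 2G_i\Omega_i/\eta_i$. The final step passes from the clean iterate $\tilde x^{t^\ast}$ to the actually played strategy $x^{t^\ast} = \tilde x^{t^\ast} + c^{t^\ast}$: because utilities are $1$-bounded, each player's best-response gap changes by at most $\|c_i^{t^\ast}\|_1$, giving the extra additive $\|c_i^t\|_1$ in the approximation guarantee. The remaining work is the careful bookkeeping of ensuring the threefold-inflated clean variation stays absorbed by the stability term under the tightened learning rate while certifying that every corruption-induced cross term is summable against the finite budgets $C_i$ and $M_i$; the rest is a routine adaptation of the uncorrupted argument.
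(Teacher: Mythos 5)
Your proposal matches the paper's proof essentially step for step: the same decomposition $\mathrm{mReg}_i^T = \max_{x^\ast}\sum_t m_i\langle x^\ast - \tilde{x}_i^t, v_i^t\rangle + \sum_t m_i\langle \tilde{x}_i^t - x_i^t, v_i^t\rangle$ absorbing the corruption at cost $m_i C_i$, the same factor-$3$ inflation of the payoff-variation term via $\|x_j^t - x_j^{t-1}\|_1^2 \leq 3\|c_j^t\|_1^2 + 3\|c_j^{t-1}\|_1^2 + 3\|\tilde{x}_j^t - \tilde{x}_j^{t-1}\|_1^2$ (explaining the tightened $\sqrt{\underline{m}/(3\Bar{m})}$ learning rate), the same $\sum_t\|c_i^t\|_1^2 \leq M_i C_i$ bound yielding the $48(\eta^1)^2$ and $8\eta^1$ corruption terms, the same pigeonhole on the stability term, and the same final conversion via the OMD variational inequality, $G_i$-smoothness, and the $\|c_i^t\|_1$ shift from $\tilde{x}^t$ to $x^t$. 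The argument is correct and there is nothing substantive to distinguish it from the paper's own proof.
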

When the corruption is finite, we can expect the iterates of OMD converge to the set of Nash equilibrium. The following Theorem gives a guarantee that is similar to that of the non-corrupted case. 
\begin{restatable}{thm}{omdcorruptnashset}\label{thm:omd_corrupt_nash_set}
     Suppose $x^t$ is the sequence of strategies played with OMD under corruptions and the conditions specified in Theorem \ref{thm:omd_corrupt} are satisfied, then $x^t$ converges to the set of Nash equilibrium of the game. 
\end{restatable}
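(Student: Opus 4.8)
The plan is to mirror the set-convergence argument for the uncorrupted dynamics (Theorem \ref{thm:omdset}), treating the corruption as a summable perturbation of the underlying Lyapunov descent. Fix an arbitrary Nash equilibrium $x^\ast \in E$ and consider the weighted Bregman potential $\Phi^t = \sum_{i=1}^n m_i D_{R_i}(x_i^\ast, g_i^t)$, where $g_i^t$ denotes the auxiliary anchor iterate of the corrupted OMD recursion. First I would derive a one-step descent inequality for $\Phi^t$. In the uncorrupted analysis this inequality, obtained from the RVU property and the learning-rate condition, carries a negative stability term penalizing the movement of consecutive iterates; I would reproduce that computation while carrying the extra contributions introduced by replacing the prescribed iterate $\tilde x_i^{t+1}$ with the played iterate $x_i^{t+1} = \tilde x_i^{t+1} + c_i^{t+1}$. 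The corruption enters both through inner products of the form $\langle v_i^t, c_i^{t+1}\rangle$ and through the payoff variation $\|v_i^t - v_i^{t-1}\|_\ast$, which now also reflects the opponents' corruption. Since $u_i \in [-1,1]$ and the utilities are multilinear in the strategies, each such contribution is bounded by a constant multiple of $\sum_i \|c_i^t\|_1$, up to higher-order terms that are themselves $O(\|c_i^t\|_1^2)$.

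Next I would telescope this perturbed inequality from $t = 1$ to $T$. The negative stability terms accumulate on the left, while the right-hand side equals $\Phi^1$ plus the summed corruption error $\sum_t O(\sum_i \|c_i^t\|_1)$. Because the dynamics operate in a corrupted regime with finite corruption level, $\sum_t \|c_i^t\|_1 = C_i < \infty$, so this series converges and the right-hand side is finite uniformly in $T$. Consequently $\sum_t \|\tilde x^{t+1} - \tilde x^t\|^2 < \infty$, which forces $\|\tilde x^{t+1} - \tilde x^t\| \to 0$; combined with $\|c_i^t\|_1 \to 0$ (the general term of a convergent series), this yields $\|x^{t+1} - x^t\| \to 0$. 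The tightened learning-rate bound in Theorem \ref{thm:omd_corrupt} (the factor $\sqrt{\underline m/(3\bar m)}$ in place of $\sqrt{\underline m/\bar m}$) is exactly what guarantees the negative stability term continues to dominate the enlarged payoff-variation term once the corruption has been absorbed.

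Having established vanishing iterate movement, I would upgrade the existential guarantee of Theorem \ref{thm:omd_corrupt} into a per-iterate statement. Using the optimality conditions of the OMD argmax together with the $G_i$-smoothness of $R_i$, the equilibrium gap of $x^t$ against any unilateral deviation is bounded by a constant multiple of $\|x^{t+1} - x^t\|$ plus the corruption magnitude $\|c_i^t\|_1$, matching the $\epsilon\cdot(c_\ast + 2\max_i\{G_i \Omega_i/\eta_i\}) + \|c_i^t\|_1$ error reported in that theorem. Both quantities tend to $0$, so the approximation error of $x^t$ as a Nash equilibrium vanishes as $t \to \infty$. Finally I would pass to limit points: the iterates lie in the compact set $\Delta(A)$, so any subsequence admits a convergent sub-subsequence $x^{t_k} \to \bar x$, and by continuity of each $u_i$ together with the vanishing approximation error, $\bar x$ satisfies the Nash inequality exactly, so $\bar x \in E$. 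Since every limit point lies in $E$, we conclude $\dist(x^t, E) \to 0$, i.e. $x^t$ converges to the set of Nash equilibria.

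The step I expect to be the main obstacle is the bookkeeping in the first two paragraphs: one must verify that each corruption-induced cross term in the telescoped inequality is genuinely $O(\|c_i^t\|_1)$ and hence summable, and, more delicately, that the additional quadratic payoff-variation generated by the opponents' corruption can be re-absorbed into the negative stability term under the stated (tightened) learning rate, rather than overwhelming the descent structure and breaking summability.
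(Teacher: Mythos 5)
Your proposal is correct and follows essentially the same route as the paper: the paper's (very terse) proof simply invokes the uniform bound $\sum_{i=1}^n\sum_{t=1}^T\left(\|\tilde{x}_i^t - g_i^t\|^2 + \|\tilde{x}_i^t - g_i^{t-1}\|^2\right) < \infty$ already established in the proof of Theorem~\ref{thm:omd_corrupt} (your telescoped Bregman descent with summable corruption terms, absorbed by the tightened $\sqrt{\underline{m}/(3\bar{m})}$ learning rate, is exactly that computation in Lemmas~C.1--C.2), together with $\|c_i^t\|_1 \to 0$, to conclude every late iterate is an $\epsilon$-approximate equilibrium, and then applies the compactness/limit-point argument of Theorem~\ref{thm:omdset} verbatim. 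The only cosmetic difference is that the per-iterate equilibrium gap is controlled through the anchor distances $\|\tilde{x}_i^t - g_i^t\|$ and $\|g_i^t - g_i^{t-1}\|$ via the variational inequality for $g_i^t$, rather than through $\|x^{t+1}-x^t\|$ as you phrase it, but your telescoping produces precisely those summable quantities, so the argument goes through unchanged.
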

Further, when the distances between the Nash equilibria are larger than zero and the corruptions are finite, the iterates of OMD converge to a Nash equilibrium. 
\begin{restatable}{thm}{omdcorruptnash}\label{thm:omd_corrupt_nash}
     Suppose $x^t$ is the sequence of strategies played with OMD under corruptions and the conditions specified in Theorem \ref{thm:omd_corrupt} are satisfied. If the set of Nash equilibrium of the game is discrete, then $x^t$ converges to a Nash equilibrium of the game. 
\end{restatable}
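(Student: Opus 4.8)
The plan is to reproduce the covering argument behind the uncorrupted pointwise-convergence result (Theorem~\ref{thm:omd2}), with the corruption absorbed through its summability. Two facts from the excerpt do most of the work. First, Theorem~\ref{thm:omd_corrupt_nash_set} already guarantees that the played (corrupted) iterates $x^t$ converge to the Nash equilibrium set $E$, i.e. $\dist(x^t, E) \to 0$. Second, the discreteness hypothesis $d = \inf_{x,y\in E,\, x\neq y}\|x-y\| > 0$, together with the compactness of $\Delta(A)$, forces $E$ to be finite, since disjoint balls of radius $d/2$ centered at the points of $E$ cannot accumulate in a bounded set.

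The one genuinely new estimate I would establish is that the consecutive-iterate gap vanishes, $\|x^{t+1} - x^t\| \to 0$. Decomposing $x^{t+1} - x^t = (\tilde{x}^{t+1} - \tilde{x}^t) + (c^{t+1} - c^t)$, the second bracket vanishes because the finite corruption budget $\sum_t \|c_i^t\|_1 = C_i < \infty$ forces $\|c^t\|_1 \to 0$. For the first bracket, I would appeal to the same regret/RVU machinery that underlies Theorems~\ref{thm:omd_corrupt} and~\ref{thm:omd_corrupt_nash_set}: that analysis bounds $\sum_t \|\tilde{x}^{t+1} - \tilde{x}^t\|^2$ up to additive terms controlled by $\sum_i C_i$ and the suprema $M_i$, so the summands are summable and hence $\|\tilde{x}^{t+1} - \tilde{x}^t\| \to 0$.

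Given these ingredients, the geometric step is routine. Fix a radius $r < d/2$ and set $\delta = d - 2r > 0$; the balls $B_j = \{x : \|x - x^{\ast(j)}\| < r\}$ around the finitely many equilibria $x^{\ast(j)} \in E$ are pairwise separated by at least $\delta$. Convergence to $E$ yields $T_1$ with $x^t \in \bigcup_j B_j$ for all $t \geq T_1$, while the vanishing-gap estimate yields $T_2$ with $\|x^{t+1} - x^t\| < \delta$ for all $t \geq T_2$. For $t \geq \max(T_1, T_2)$, if $x^t \in B_j$ and $x^{t+1} \in B_k$ with $k \neq j$, then the triangle inequality would give $\|x^{t+1} - x^t\| > d - 2r = \delta$, a contradiction; hence consecutive iterates lie in the same ball, and the sequence is eventually trapped in a single $B_{j^\ast}$. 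Since $B_{j^\ast}$ contains exactly one point of $E$, convergence to $E$ inside $B_{j^\ast}$ is convergence to $x^{\ast(j^\ast)}$, which proves the claim.

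The main obstacle is the vanishing-gap estimate under corruption, not the covering argument. The subtlety is that corruption enters the OMD recursion twice: additively, through the terms $c^t$, and indirectly, through the payoff vectors $v^t = v(x^t)$ evaluated at the corrupted strategies, which feed a corrupted gradient into the next step. I would therefore need to verify carefully that the corrupted version of the regret bound accumulates only summable error—proportional to $\sum_i C_i$ and the suprema $M_i$—rather than a term growing in $t$, so that the summability of $\sum_t \|\tilde{x}^{t+1} - \tilde{x}^t\|^2$ survives. Once that is confirmed, the remainder of the proof is identical to the uncorrupted case of Theorem~\ref{thm:omd2}.
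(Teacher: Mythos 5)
Your proposal is correct and takes essentially the same route as the paper: the paper's proof likewise reduces the claim to the vanishing-gap estimate $\|x^t - x^{t-1}\| \to 0$, obtained by splitting off the corruption (summable, hence $\|c_i^t\| \to 0$) and controlling the uncorrupted increments through the summable quantity $\sum_t \bigl(\|\tilde{x}_i^t - g_i^t\|^2 + \|\tilde{x}_i^t - g_i^{t-1}\|^2\bigr)$ from the proof of Theorem \ref{thm:omd_corrupt}, then combines set convergence (Theorem \ref{thm:omd_corrupt_nash_set}) with the finiteness and ball-trapping arguments of Lemmas \ref{lem:ne_eq} and \ref{lem:converge_a_ne}. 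The ``main obstacle'' you flag---that the corrupted regret bound accumulates only additive error of order $\sum_i M_i C_i + \sum_i C_i$ rather than a term growing in $t$---is exactly what the paper verifies in the proof of Theorem \ref{thm:omd_corrupt}, so your argument closes as written.
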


To our best knowledge, the only previous work on games with corrupted dynamics is \cite{tsuchiya2024corrupted}, which proposed a variant of OFTRL that enjoys $O(1)$ regret when the corruption level is small. This implies that their method also converges to a correlated equilibrium at the rate of $O(1/T)$ when the corruption level is small. In comparison, our method gives the first convergence guarantee to a Nash equilibrium under a finite corruption level, which is a much stricter equilibrium than a correlated equilibrium. Our convergence rate is $O(1/\epsilon^2)$ for an $\epsilon$-approximate Nash equilibrium, and our technique implies pointwise convergence. 

the previous result was only achieved under OFTRL \cite{tsuchiya2024corrupted}. In the non-corrupted case, OFTRL and OMD are equivalent in the sense that they give similar guarantees. The following two Theorem show that this equivalency extends to the corruption case when the corruption level is finite. 
\begin{restatable}{thm}{oftrlcorrupt}\label{thm:oftrl_corrupt}
      If each player employs OFTRL with corruption with
    \begin{itemize}
        \item a pair of norms such that $\|y\|\geq c\|y\|_1$, $\|y\|_\ast \leq c_\ast \|y\|_\infty$ for some constant $c, c_\ast$ and for any $y$,
        \item a $G_i$ smooth regularizer $R_i$, and $R_i$ is Legendre with domain $D_i\subseteq\Delta(A_i)$
        \item learning rate $\eta \leq \frac{c}{4c_\ast (n-1)} \sqrt{\frac{\underline{m}}{3\Bar{m}}}$ , where $\eta = \max_i \eta_i$ ,  $\underline{m} = \min m_i$ , $\Bar{m} = \max m_i$.
    \end{itemize}
    Then if the game has non-negative weight regret, for any $\epsilon > 0$, after 
    \begin{align*}
        T > \ & \frac{1}{\epsilon^2 } \left\{\sum^n_{i=1}\frac{8m_i\cdot\Bar{R}_i  \eta^1}{\eta_i\cdot\underline{m} }+ 48 (\eta^1)^2\cdot\frac{\Bar{m}}{\underline{m}} c_\ast^2 (n-1)^2  \sum^n_{i=1}M_i\cdot C_i  + 8 \eta^1\cdot\frac{\Bar{m}}{\underline{m}}  \sum^n_{i=1}C_i \right\} 
    \end{align*} iterations, there exists an iterate $y^t$ that is an $\max_{i\in[n]}\epsilon \cdot \left(c_\ast + 2\left\{\frac{G_i\cdot\Omega_i}{\eta_i}\right\}\right)+\|c_i^t\|_1$-approximate Nash Equilibrium, where $\Omega_i=\sup_{x,y\in\Delta\left(A_i\right)}\|x-y\|$ and $M_i=\sup_{t\geq 1} c_i^t$. 
\end{restatable}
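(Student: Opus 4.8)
The plan is to reduce the corrupted OFTRL guarantee to the corrupted OMD guarantee already established in Theorem \ref{thm:omd_corrupt}, exploiting the equivalence between the two algorithms. The paper's definition of OMD in Equation \ref{eq:omd} was deliberately initialized to make this equivalence clean (as remarked after Theorem \ref{thm:omd}), so the first task is to make that equivalence precise in the corrupted regime and then transfer the conclusion verbatim.

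First I would establish that, with a Legendre regularizer $R_i$ and a constant learning rate $\eta_i$, the clean OFTRL iterate $\tilde{y}_i^{t}$ coincides with the clean OMD iterate $\tilde{x}_i^{t}$, where the OMD state $g_i^{t}$ plays the role of the FTRL cumulative-gradient leader and the optimistic prediction $v_i^{t}$ matches OFTRL's leading term $\hat v_i^{t}$. This is the classical identity relating lazy FTRL updates to mirror-descent proximal steps through the Bregman divergence $D_{R_i}$; the Legendre assumption guarantees that the conjugate is differentiable and the $\argmax$ is attained, so the two recursions generate the same dual iterates. I would prove it by induction on $t$: assuming the clean iterates agree and the feedback histories agree up to time $t$, the OFTRL linearization $\hat v_i^{t}+\sum_{s=1}^{t}\hat v_i^{s}$ reproduces exactly the OMD leader step for $g_i^{t}$ together with the optimistic correction in $\tilde x_i^{t+1}$, giving $\tilde y_i^{t+1}=\tilde x_i^{t+1}$.

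Second, I would propagate the corruption. In both algorithms the played strategy is the clean output plus the \emph{same} corruption vector, namely $y_i^{t}=\tilde y_i^{t}+c_i^{t}$ and $x_i^{t}=\tilde x_i^{t}+c_i^{t}$, so the inductive hypothesis $\tilde y_i^{t}=\tilde x_i^{t}$ immediately yields $y_i^{t}=x_i^{t}$. Consequently the utility feedback stays synchronized, $\hat v_i^{t}=v_i(y^{t})=v_i(x^{t})=v_i^{t}$, and the induction closes. Thus the entire corrupted OFTRL trajectory is identical to the corrupted OMD trajectory under the matching choice $\eta_i^{t}\equiv\eta_i$, which is admissible as a non-increasing schedule in Theorem \ref{thm:omd_corrupt} and forces $\eta^{1}=\eta$. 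Once the trajectories are identified, applying Theorem \ref{thm:omd_corrupt} produces an iterate of the shared trajectory that is a $\max_{i\in[n]}\epsilon\left(c_\ast+2\{G_i\Omega_i/\eta_i\}\right)+\|c_i^{t}\|_1$-approximate Nash equilibrium after the stated number of iterations; both the learning-rate condition and the iteration-count bound in the two theorems coincide under $\eta^{1}=\eta$, so no constants need to be re-derived.

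The main obstacle I anticipate is the rigorous justification of the OFTRL--OMD equivalence in the presence of corruption. The equivalence is fundamentally a statement about the \emph{clean} iterates, while the corruption $c_i^{t}$ enters only through the shared feedback $v_i(\cdot)$; one must verify carefully that adding $c_i^{t}$ (which may push the played point toward, or outside, the boundary of $\Delta(A_i)$) never couples back into the regularized updates and never violates the Legendre/interiority conditions that the equivalence relies on. Checking that the corruption is injected identically in both recursions and does not contaminate the induction on the dual iterates is the delicate technical point; the rest follows mechanically from Theorem \ref{thm:omd_corrupt}.
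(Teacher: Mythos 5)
Your proposal is correct and takes essentially the same route as the paper: the paper's proof likewise shows by induction (using invertibility of $\nabla R_i$ for the Legendre regularizer) that the clean iterates satisfy $\tilde{y}^t=\tilde{x}^t$, hence $y^t=x^t$ since the same corruption $c_i^t$ is added to both, so the corrupted OFTRL trajectory coincides with a corrupted OMD trajectory run with constant learning rate $\eta_i^t\equiv\eta_i$, and then invokes Theorem~\ref{thm:omd_corrupt}. Your closing caveat about the corruption possibly leaving the simplex is a fair observation, but it plays no role in the paper's argument either, since the corruption enters only through the shared feedback $v_i(\cdot)$ and never through the regularized updates, exactly as you describe.
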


\begin{restatable}{thm}{oftrlcorruptnash}\label{thm:oftrl_corrupt_nash}
     Suppose $y^t$ is the sequence of strategies player with OFTRL under corruptions and the conditions specified in Theorem \ref{thm:oftrl_corrupt} are satisfied. If the set of Nash equilibrium of the game is discrete, then $y^t$ converges to the a Nash equilibrium of the game.
\end{restatable}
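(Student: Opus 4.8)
The plan is to obtain pointwise convergence by reducing the corrupted OFTRL dynamics to the corrupted OMD dynamics, and then reusing the discrete-equilibrium argument already developed for OMD in Theorem~\ref{thm:omd_corrupt_nash}. This is the natural route given that, in the non-corrupted regime, OFTRL and OMD are known to generate the same iterates once their regularizers and learning rates are matched; the work is to verify that this equivalence survives the injection of the corruption terms $c_i^t$.

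First I would prove that the realized strategies produced by corrupted OFTRL coincide with those produced by corrupted OMD. Recall that OFTRL carries a cumulative gradient $\hat{v}_i^t + \sum_{s=1}^t \hat{v}_i^s$, whereas OMD is written recursively through the Bregman center $g_i^t$. Using the Legendre assumption on $R_i$ (imposed in Theorem~\ref{thm:oftrl_corrupt}), I would rewrite the OFTRL cumulative-gradient update as a recursive proximal step and check by induction that $\tilde{y}_i^{t} = \tilde{x}_i^{t}$ for every $t$. The crucial point is that in both algorithms the realized profile is the algorithmic output shifted by the \emph{same} corruption vector $c_i^{t+1}$, and the payoff $v_i^{t+1} = v_i(y^{t+1})$ (respectively $v_i(x^{t+1})$) is evaluated at the corrupted profile in both cases; hence the two recursions ingest identical feedback at every round and produce identical realized sequences.

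Once the sequences are shown to be identical, all structural consequences established for corrupted OMD transfer verbatim to $y^t$: the finite-iterate $\epsilon$-approximate-Nash guarantee of Theorem~\ref{thm:oftrl_corrupt}, and the convergence of $y^t$ to the set $E$ of Nash equilibria (the OFTRL analogue of Theorem~\ref{thm:omd_corrupt_nash_set}, which now follows for free). In particular $\dist(y^t, E) \to 0$, and the iterates are bounded since they lie in $\Delta(A)$. I would then run the covering argument for the discrete case: with $d = \inf_{x,y \in E, x \neq y}\|x-y\| > 0$, cover $E$ by open balls of radius $d/3$ about its points, which by discreteness are pairwise separated. Since $\dist(y^t, E)\to 0$, for large $t$ each iterate lies in one such ball; and since consecutive iterates move by a vanishing amount, controlled jointly by the learning rate, the smoothness of the proximal map, and the summable corruption $\sum_t \|c_i^t\|_1 = C_i < \infty$, the iterates cannot jump between separated balls once $t$ is large. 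Hence $y^t$ is eventually trapped in a single ball, forcing all limit points to coincide with its center, which gives convergence to one Nash equilibrium.

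The main obstacle is the first step: showing the OFTRL–OMD equivalence is exactly preserved under corruption. The corruption enters the OMD recursion through both the realized strategy and the subsequent Bregman center $g_i^{t+1}$, whereas OFTRL stores only a running payoff sum, so reconciling the two bookkeeping schemes to prove the realized sequences are \emph{identical} rather than merely close is delicate, and it is precisely here that the Legendre property of $R_i$ is needed to guarantee the two proximal maps agree. After this identity is secured, the remaining analysis is a direct reuse of the corrupted-OMD results and the functional-analytic covering argument.
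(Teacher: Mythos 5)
Your proposal matches the paper's proof: the paper likewise establishes by induction (using the Legendre property of $R_i$, so that $\nabla R_i$ is invertible and the unrolled OMD recursion $\nabla R_i(\tilde{x}_i^{t+1}) = \eta_i(v_i^t + \sum_{s=1}^t v_i^s)$ matches the OFTRL cumulative-gradient condition, with both algorithms receiving payoffs evaluated at the same corrupted profiles) that the corrupted OFTRL and corrupted OMD realized sequences are identical, and then transfers the discrete-equilibrium pointwise convergence of Theorem~\ref{thm:omd_corrupt_nash} verbatim to $y^t$. Your additional re-derivation of the covering argument is redundant but harmless, since it is exactly the content of Lemma~\ref{lem:converge_a_ne} already invoked for corrupted OMD.
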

\section{Experiments}
We complement our theoretical findings by empirically evaluating the optimistic algorithms on the Matching pennies, a classic zero-sum game, and a harmonic game. The matching pennies is a two-player zero-sum game where the utility is given by $\begin{bmatrix}-1 & 1 \\ 1 & -1 \end{bmatrix}$. The two-player harmonic game has the utility of $\begin{bmatrix}1 & 2 \\ 2 & 1 \end{bmatrix}$. In both experiments, the learning rate of OMD and OFTRL are set to $0.1$.
\begin{figure}
    \centering
    \includegraphics[width=0.4\linewidth]{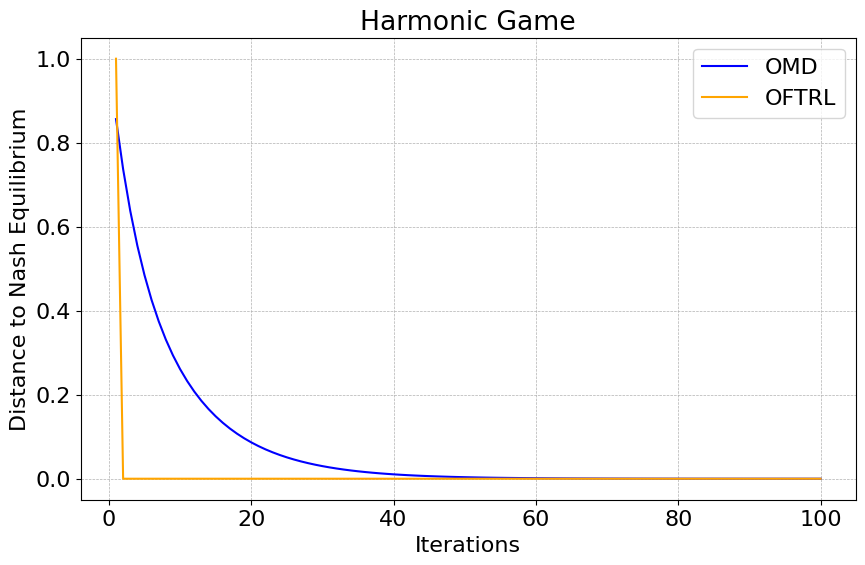}
    \includegraphics[width=0.4\linewidth]{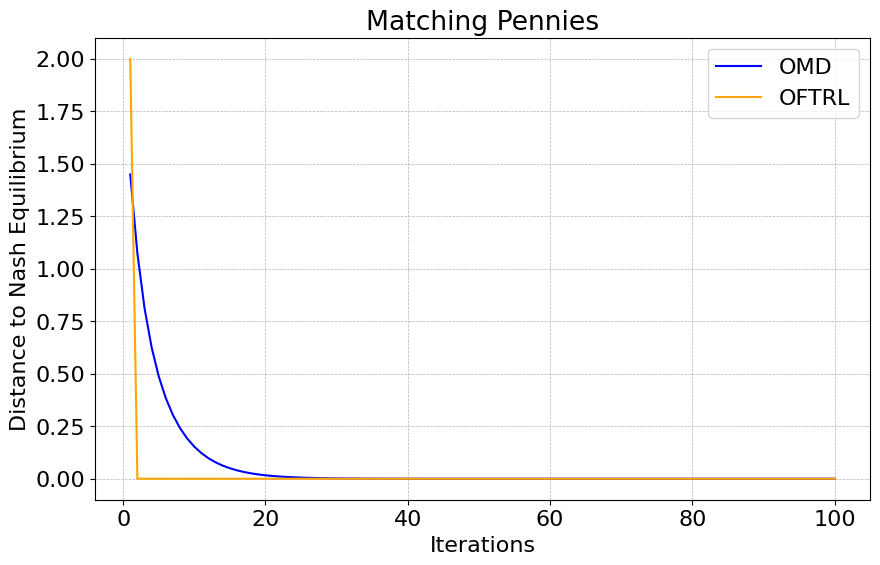}
    \caption{The two plots illustrate the convergence of OMD and OFTRL algorithms towards the Nash equilibrium in the Matching Pennies and the Harmonic game over $100$ iterations.}
    \label{fig:enter-label}
\end{figure}
\section{Conclusion and Future Directions}
In this work, we propose the notion of non-negative weighted regret, which serves as a framework to encapsulate the harmonic games and important classes of zero-sum games. This notion is an extension of the games with non-negative regret \cite{anagnostides2022last} and helps to further our understanding of the interplay between harmonic games and zero-sum games, which are both games with conflicting interest, but do not have an inclusion relationship. We then study the optimistic variants of the classic no-regret learning algorithms, namely the optimistic mirror descent (OMD) and the optimistic follow the regularized leader (OFTRL) algorithms. We show that both algorithms can converge to $\epsilon$-approximate Nash equilibrium efficiently at a rate of $1/\epsilon^2$. Moreover, our result implies pointwise convergence of a Nash equilibrium when the set of Nash equilibrium is discrete. To our best knowledge, this is the first pointwise convergence result in harmonic games and games with non-negative regret. This convergence holds even if the players do not comply with their prescribed algorithm up to a finite corruption level, which corroborates a wider set of applications.

It is known that the zero-sum game can be potential. Yet the class of potential games does not seem to fit in the framework of non-negative (weighted) regret and it is unclear whether the optimistic algorithms are effective in potential games. An important direction would be to explore the relationship between non-negative (weighted) regret and potential games. While the non-negative (weighted) regret can be used to summarize the convergent results in zero-sum games and harmonic games, it remains in question whether it can be used to summarize the negative behaviors (the recurrent and chaotic behaviors) of algorithms in the two classes of games. 

\bibliographystyle{alpha}
\newcommand{\etalchar}[1]{$^{#1}$}

\appendix

\newpage

\section{Harmonic Games}
\harmonic*
\begin{proof}
  From the definition of a harmonic game:
  \begin{align*}
      \sum_{i=1}^n\sum_{b_i\in A_i}\mu_{i,b_i}\left(u_i(a_i,a_{-i})-u_i(b_i,a_{-i}\right)=0,\forall a\in A\,,
  \end{align*}
  Multiply both sides by $x_a$,$\forall x\in \Delta(A)$:
  \begin{align*}
      x_a\cdot\sum_{i=1}^n\sum_{b_i\in A_i}\mu_{i,b_i}\left(u_i(a_i,a_{-i})-u_i(b_i,a_{-i}\right)=0,\forall a\in A\,,
  \end{align*}
  Summing it for all $a\in A$,we have that,
  \begin{align*}
      \sum_{a\in A} x_a\cdot\sum_{i=1}^n\sum_{b_i\in A_i}\mu_{i,b_i}\left(u_i(a_i,a_{-i})-u_i(b_i,a_{-i}\right)=0,\forall x\in \Delta(A)\,,
  \end{align*}
  Exchange summation sequence, we have that,
  \begin{align*}
       \sum_{i=1}^n\sum_{b_i\in A_i}\mu_{i,b_i}\sum_{a\in A} x_a\cdot \left(u_i(a_i,a_{-i})-u_i(b_i,a_{-i}\right)=0,\forall x\in \Delta(A)\,,
  \end{align*}
  And then we have,
  \begin{align}
      \sum_{i=1}^n\sum_{b_i\in A_i}\mu_{i,b_i} \left(u_i(x_i,x_{-i})-\sum_{a_i\in A_i}x_{i.a_i}\cdot u_i(b_i,x_{-i})\right)=0,\forall x\in \Delta(A)\,,
  \end{align}
  Since $\sum_{a_i\in A_i}x_{i,a_i}=1$,
  \begin{align*}
     \sum_{i=1}^n\sum_{b_i\in A_i}\mu_{i,b_i} \left(\langle v_i(x),x_i\rangle-u_i(b_i,x_{-i})\right)=0,\forall x\in \Delta(A)\,, 
  \end{align*}
    Let $m_i=\sum_{b_i\in A_i}\mu_{i,b_i}>0$,then,
    \begin{align*}
      \sum_{i=1}^n m_i\langle v_i(x),x_i\rangle=\sum_{i=1}^n\langle v_i(x),\mu_i\rangle,\forall x\in \Delta(A)\,,  
    \end{align*}
    let $x^\ast = \left(\frac{\mu_i}{m_i},\cdots,\frac{\mu_n}{m_n}\right)$,since $x^\ast_{i,a_i}=\frac{\mu_{i,a_i}}{m_i}\in\left(0,1\right)$,and $\sum_{a_i\in A_i}x^\ast_{i,a_i}=\sum_{a_i\in A_i}\frac{\mu_{i,a_i}}{m_i}=1$.Therefore,$x^\ast \in \Delta(A)$.After that,we finally have:
    \begin{align*}
        \sum_{i=1}^n\mathrm{mReg}_i^T\ &\geq\sum_{i=1}^n\sum_{t=1}^T m_i\langle x_i^\ast - x_i^t,v_i^t\rangle \\ & \ =\sum_{t=1}^T\sum_{i=1}^n m_i\langle x_i^\ast,v_i^t\rangle - \sum_{t=1}^T\sum_{i=1}^n m_i\langle x_i^t,v_i^t\rangle \\ & \ =\sum_{t=1}^T\sum_{i=1}^n \langle \mu_i,v_i^t\rangle - \sum_{t=1}^T\sum_{i=1}^n \langle \mu_i,v_i^t\rangle \\ & \ =0
    \end{align*}
\end{proof}

\section{Analysis for OMD}

Define $\Bar{R}_i = \max_{x, y \in \Delta(A_i)} D_{R_i}(x,y)$.

\begin{lem}\label{lem:1}
   \begin{align*}
        \mathrm{Reg}_i^T \leq \frac{\Bar{R}_i}{\eta_i^T} + \sum^T_{t=1} \|g_i^t - x_i^t \| \|v_i^t - v_i^{t-1}\|_\ast - \frac{1}{2}\sum^T_{t=1}\frac{1}{\eta_i^t} \left(\|g_i^t - x_i^t \|^2 + \|x_i^t - g_i^{t-1}\|^2\right) \,.
   \end{align*}
\end{lem}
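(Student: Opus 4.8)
The plan is to prove an RVU-style (regret-bounded-by-variation) inequality by extracting a per-round bound on the instantaneous regret $\langle v_i^t, x_i^\ast - x_i^t\rangle$ for a fixed but arbitrary comparator $x_i^\ast \in \Delta(A_i)$, then summing over $t$ and telescoping. First I would record the first-order optimality conditions of the two proximal updates in Equation \ref{eq:omd}, re-indexed so that at round $t$ we have $g_i^t = \argmax_{g}\eta_i^t\langle g, v_i^t\rangle - D_{R_i}(g, g_i^{t-1})$ and $x_i^t = \argmax_{x}\eta_i^t\langle x, v_i^{t-1}\rangle - D_{R_i}(x, g_i^{t-1})$. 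Each update is a maximization of a concave objective over the simplex, so its variational inequality combined with the Bregman three-point identity $\langle \nabla R_i(z) - \nabla R_i(w), u - z\rangle = D_{R_i}(u,w) - D_{R_i}(u,z) - D_{R_i}(z,w)$ yields, for every feasible $u$, the $g$-step bound $\eta_i^t\langle v_i^t, u - g_i^t\rangle \le D_{R_i}(u, g_i^{t-1}) - D_{R_i}(u, g_i^t) - D_{R_i}(g_i^t, g_i^{t-1})$, and the analogous $x$-step bound which I will later evaluate at the specific point $u = g_i^t$.

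The central algebraic step is the decomposition
\begin{equation*}
\langle v_i^t, x_i^\ast - x_i^t\rangle = \langle v_i^t, x_i^\ast - g_i^t\rangle + \langle v_i^{t-1}, g_i^t - x_i^t\rangle + \langle v_i^t - v_i^{t-1}, g_i^t - x_i^t\rangle .
\end{equation*}
Bounding the first term with the $g$-step inequality at $u = x_i^\ast$ and the second term with the $x$-step inequality at $u = g_i^t$, the two copies of $D_{R_i}(g_i^t, g_i^{t-1})$ cancel exactly, and after dividing by $\eta_i^t$ I obtain
\begin{equation*}
\langle v_i^t, x_i^\ast - x_i^t\rangle \le \tfrac{1}{\eta_i^t}\big(D_{R_i}(x_i^\ast, g_i^{t-1}) - D_{R_i}(x_i^\ast, g_i^t)\big) - \tfrac{1}{\eta_i^t}\big(D_{R_i}(g_i^t, x_i^t) + D_{R_i}(x_i^t, g_i^{t-1})\big) + \langle v_i^t - v_i^{t-1}, g_i^t - x_i^t\rangle .
\end{equation*}

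Two standard estimates then finish the per-round bound: Hölder's inequality gives $\langle v_i^t - v_i^{t-1}, g_i^t - x_i^t\rangle \le \|g_i^t - x_i^t\|\,\|v_i^t - v_i^{t-1}\|_\ast$, producing the middle sum; and $1$-strong convexity of $R_i$ with respect to $\|\cdot\|$ gives $D_{R_i}(g_i^t, x_i^t) \ge \tfrac12\|g_i^t - x_i^t\|^2$ and $D_{R_i}(x_i^t, g_i^{t-1}) \ge \tfrac12\|x_i^t - g_i^{t-1}\|^2$, producing the negative quadratic sum. Summing over $t$, the leading term $\sum_t \tfrac{1}{\eta_i^t}\big(D_{R_i}(x_i^\ast, g_i^{t-1}) - D_{R_i}(x_i^\ast, g_i^t)\big)$ is a weighted telescope, which I would control by Abel summation: since $\eta_i^t$ is non-increasing the weights $1/\eta_i^t$ are non-decreasing, and since $0 \le D_{R_i}(x_i^\ast, g_i^t) \le \Bar{R}_i$ the rearranged sum collapses to at most $\Bar{R}_i/\eta_i^T$. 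Because this final bound no longer depends on $x_i^\ast$, taking the supremum over $x_i^\ast$ (the definition of $\mathrm{Reg}_i^T$) yields the claimed inequality.

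The main obstacle is not any individual estimate but orchestrating the decomposition so that the $D_{R_i}(g_i^t, g_i^{t-1})$ terms cancel precisely; this forces the $x$-step optimality condition to be evaluated at $u = g_i^t$ rather than at $x_i^\ast$, which is exactly the coupling between the prediction sequence $x_i^t$ and the update sequence $g_i^t$ that the optimistic scheme exploits. A secondary care point is the telescoping under time-varying step sizes: I must use the monotonicity $1/\eta_i^{t+1} \ge 1/\eta_i^t$ in the correct direction and rely on the uniform bound $D_{R_i}(\cdot,\cdot) \le \Bar{R}_i$ so that the Abel-summation remainder stays controlled and the whole leading term reduces cleanly to $\Bar{R}_i/\eta_i^T$.
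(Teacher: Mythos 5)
Your proposal is correct and follows essentially the same route as the paper's proof: the identical three-term decomposition of $\langle v_i^t, x_i^\ast - x_i^t\rangle$, the same variational-inequality/three-point bound applied to the $g$-step at $u = x_i^\ast$ and the $x$-step at $u = g_i^t$ (with the same cancellation of $D_{R_i}(g_i^t, g_i^{t-1})$), followed by Cauchy--Schwarz, $1$-strong convexity, and the same telescoping of $\frac{1}{\eta_i^t}\bigl(D_{R_i}(x_i^\ast, g_i^{t-1}) - D_{R_i}(x_i^\ast, g_i^t)\bigr)$ under non-increasing step sizes with the uniform bound $\Bar{R}_i$. Your explicit framing of the last step as Abel summation is just a more formal phrasing of the paper's regrouping, so there is nothing substantive to add.
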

\begin{proof}
    For $x_i^\ast \in \Delta(A_i)$, 
    \begin{align*}
        \langle x_i^\ast - x_i^t, v_i^t \rangle
        = \ & \langle g_i^t - x_i^t, v_i^t - v_i^{t-1} \rangle + \langle g_i^t - x_i^t, v_i^{t-1} \rangle + \langle x_i^\ast - g_i^t, v_i^t \rangle \,,
    \end{align*}
    and by Cauchy-Schwarz inequality
    \begin{align*}
        \langle g_i^t - x_i^t, v_i^t - v_i^{t-1} \rangle \leq \|g_i^t - x_i^t \| \|v_i^t - v_i^{t-1}\|_\ast  \,.
    \end{align*}

    Let $a^\ast = \argmax_{a \in A} \eta \langle a, x \rangle - D_R(a,c)$. Then for any $d \in A$, $\langle \eta x - \nabla R(a^\ast) + \nabla R(c), d-a^\ast \rangle \leq 0$. Rearranging this, we have
    \begin{align*}
        \langle d - a^\ast, x \rangle \leq \frac{1}{\eta} \left[ D_R(d,c) - D_R(d,a^\ast) - D_R(a^\ast, c)\right]\,.
    \end{align*}

    Applying this we get 
    \begin{align*}
        \langle g_i^t - x_i^t, v_i^{t-1} \rangle \leq 
        \frac{1}{\eta_i^t} \left[ D_{R_i}(g_i^t, g_i^{t-1}) - D_{R_i}(g_i^t, x_i^t) - D_{R_i}(x_i^t, g_i^{t-1})\right] \,,
    \end{align*}
    and 
    \begin{align*}
        \langle x_i^\ast - g_i^t, v_i^t \rangle \leq \frac{1}{\eta_i^t} \left[D_{R_i}(x_i^\ast, g_i^{t-1}) - D_{R_i}(x_i^\ast, g_i^t) - D_{R_i}(g_i^t, g_i^{t-1})\right]\,.
    \end{align*}
    Combining these, we have
    \begin{align*}
        \langle x_i^\ast - x_i^t, v_i^t \rangle 
        \leq \|g_i^t - x_i^t \|\|v_i^t - v_i^{t-1}\|_\ast + \frac{1}{\eta_i^t} \left(D_{R_i}(x_i^\ast, g_i^{t-1}) - D_{R_i}(x_i^\ast, g_i^t)- \frac{1}{2}\|g_i^t - x_i^t\|^2 - \frac{1}{2}\|x_i^t - g_i^{t-1}\|^2\right) \,.
    \end{align*}
    And we used the strong convexity of $R_i$ : $D_{R_i}(x,g) \geq \frac{1}{2}\|x-g\|^2$ , for $\forall x,g \in \Delta(A_i)$.Summing over $T$, we have
    \begin{align*}
        \sum^T_{t=1}\langle x_i^\ast - x_i^t, v_i^t \rangle 
        \leq \ & \frac{1}{\eta_i^1} D_{R_i}(x_i^\ast, g_i^0) + \sum^T_{t=2} \left(\frac{1}{\eta_i^t} - \frac{1}{\eta_i^{t-1}}\right)D_{R_i}(x_i^\ast, g_i^{t-1}) + \sum^T_{t=1} \|g_i^t - x_i^t \| \|v_i^t - v_i^{t-1} \|_\ast \\
        & \ - \frac{1}{2}\sum^T_{t=1}\frac{1}{\eta_i^t}  \left(\|g_i^t - x_i^t\|^2 + \|x_i^t - g_i^{t-1}\|^2\right) \\ \leq \ & \ \frac{\Bar{R}_i}{\eta_i^T} + \sum^T_{t=1} \|g_i^t - x_i^t \| \|v_i^t - v_i^{t-1}\|_\ast - \frac{1}{2}\sum^T_{t=1}\frac{1}{\eta_i^t} \left(\|g_i^t - x_i^t \|^2 + \|x_i^t - g_i^{t-1}\|^2\right)\,.
    \end{align*}
    
\end{proof}
\begin{lem}\label{lem:reg}
    \begin{align*}
        \mathrm{Reg}_i^T \leq \frac{\Bar{R}_i}{\eta_i^T} + \sum^T_{t=1} \eta_i^t\|v_i^t - v_i^{t-1}\|_\ast^2 - \frac{1}{4}\sum^T_{t=1}\frac{1}{\eta_i^t} \left(\|g_i^t - x_i^t \|^2 + \|x_i^t - g_i^{t-1}\|^2\right) \,.
    \end{align*}
\end{lem}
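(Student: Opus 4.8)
The plan is to derive the claimed bound directly from Lemma~\ref{lem:1} by controlling the mixed term $\sum_{t=1}^T \|g_i^t - x_i^t\|\,\|v_i^t - v_i^{t-1}\|_\ast$ with a weighted Young's inequality. The point of the maneuver is to trade the product of the two norms for a pure variation term $\eta_i^t\|v_i^t - v_i^{t-1}\|_\ast^2$ plus a quadratic in $\|g_i^t - x_i^t\|$ that can be absorbed into the negative quadratic terms already present in Lemma~\ref{lem:1}.

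First I would apply, for each $t$, the inequality $ab \le \tfrac{1}{4\eta_i^t}a^2 + \eta_i^t b^2$ (Young's inequality with weight $2\eta_i^t$) to $a = \|g_i^t - x_i^t\|$ and $b = \|v_i^t - v_i^{t-1}\|_\ast$, giving
\begin{align*}
  \|g_i^t - x_i^t\|\,\|v_i^t - v_i^{t-1}\|_\ast \le \frac{1}{4\eta_i^t}\|g_i^t - x_i^t\|^2 + \eta_i^t\|v_i^t - v_i^{t-1}\|_\ast^2\,.
\end{align*}
Substituting this into Lemma~\ref{lem:1}, the newly created $\tfrac{1}{4\eta_i^t}\|g_i^t - x_i^t\|^2$ combines with the existing $-\tfrac{1}{2\eta_i^t}\|g_i^t - x_i^t\|^2$ to leave $-\tfrac{1}{4\eta_i^t}\|g_i^t - x_i^t\|^2$, which already matches the target coefficient.

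It remains to handle the $-\tfrac{1}{2\eta_i^t}\|x_i^t - g_i^{t-1}\|^2$ term, whose coefficient is larger in magnitude than the $\tfrac14$ demanded by the statement. Since this term is nonnegative and enters with a minus sign, I would simply relax $-\tfrac{1}{2\eta_i^t}$ to $-\tfrac{1}{4\eta_i^t}$, which only increases the right-hand side and therefore preserves the upper bound. Summing the per-$t$ estimates and retaining $\tfrac{\Bar{R}_i}{\eta_i^T}$ unchanged yields exactly the stated inequality.

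The argument has no real obstacle: it is a purely algebraic, term-by-term manipulation requiring no properties of the game or of the iterates beyond what Lemma~\ref{lem:1} supplies. The only point demanding care is the choice of the Young weight, which must be exactly $2\eta_i^t$ so that the induced $\tfrac{1}{4\eta_i^t}$ coefficient is dominated by the available $\tfrac{1}{2\eta_i^t}$ while producing precisely the $\eta_i^t$ factor on the variation term.
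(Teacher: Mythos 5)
Your proposal is correct and is essentially the paper's own proof: the paper likewise bounds the mixed term via Young's inequality $ab \le \frac{\rho^t}{2}b^2 + \frac{1}{2\rho^t}a^2$ with $\rho^t = 2\eta_i^t$, which is exactly your choice of weight, then absorbs the resulting $\frac{1}{4\eta_i^t}\|g_i^t - x_i^t\|^2$ into the negative quadratic from Lemma~\ref{lem:1} and relaxes the coefficient on $\|x_i^t - g_i^{t-1}\|^2$ from $\tfrac{1}{2}$ to $\tfrac{1}{4}$. No gaps; your account is, if anything, more explicit about the absorption step than the paper's two-line version.
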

\begin{proof}
    For any $\rho^t > 0$, we have
    \begin{align*}
        \|g_i^t - x_i^t \| \|v_i^t - v_i^{t-1}\|_\ast 
        \leq \frac{\rho^t}{2}\|v_i^t - v_i^{t-1}\|^2_\ast + \frac{1}{2\rho^t} \|g_i^t - x_i^t\|^2\,.
    \end{align*}
    Using $\rho^t = 2\eta_i^t$ with Lemma \ref{lem:1}, we have
    \begin{align*}
        \sum^T_{t=1}\langle x_i^\ast - x_i^t, v_i^t \rangle 
        \leq \ & \frac{\Bar{R}_i}{\eta_i^T} + \sum^T_{t=1} \eta_i^t\|v_i^t - v_i^{t-1}\|_\ast^2 - \frac{1}{4}\sum^T_{t=1}\frac{1}{\eta_i^t} \left(\|g_i^t - x_i^t \|^2 + \|x_i^t - g_i^{t-1}\|^2\right) \,.
    \end{align*}
\end{proof}

\begin{lem}\label{lem:mReg}
    \begin{align*}
        \mathrm{mReg}_i^T \leq \frac{m_i \Bar{R}_i}{\eta_i^T} +m_i \sum^T_{t=1} \eta_i^t\|v_i^t - v_i^{t-1}\|_\ast^2 - \frac{m_i}{4}\sum^T_{t=1}\frac{1}{\eta_i^t} \left(\|g_i^t - x_i^t \|^2 + \|x_i^t - g_i^{t-1}\|^2\right)  \,.
    \end{align*}
\end{lem}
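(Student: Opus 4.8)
The plan is to recognize that Lemma \ref{lem:mReg} is simply the positively-weighted version of Lemma \ref{lem:reg}, so the entire argument reduces to pulling the scalar weight $m_i$ out of the maximization that defines $\mathrm{mReg}_i^T$. First I would unfold the two definitions side by side: by definition $\mathrm{mReg}_i^T = \max_{x^\ast \in \Delta(A_i)} \sum^T_{t=1} m_i \langle x^\ast - x_i^t, v_i^t \rangle$, whereas $\mathrm{Reg}_i^T = \max_{x^\ast_i \in \Delta(A_i)} \sum^T_{t=1} \langle x^\ast_i - x_i^t, v_i^t \rangle$. Since $m_i$ does not depend on the maximizing variable $x^\ast$, the only thing that matters is its sign.

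The key step is to invoke the positivity of the weight. In the regime of non-negative weighted regret we work with $m \in \mathbb{R}^n_{++}$, so $m_i > 0$, and for a positive scalar the maximum commutes with multiplication, i.e. $\max_{x^\ast} m_i f(x^\ast) = m_i \max_{x^\ast} f(x^\ast)$. This yields the identity $\mathrm{mReg}_i^T = m_i \, \mathrm{Reg}_i^T$. I would emphasize that this is exactly where positivity is essential: if $m_i$ were negative the maximum would turn into a minimum and the inequality direction would reverse, so the assumption $m_i > 0$ is doing real work here even though the manipulation is elementary.

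Having established $\mathrm{mReg}_i^T = m_i \, \mathrm{Reg}_i^T$, the final step is to apply Lemma \ref{lem:reg} to bound $\mathrm{Reg}_i^T$ and multiply the resulting inequality through by $m_i > 0$, which preserves the inequality. Multiplying the three terms on the right-hand side of Lemma \ref{lem:reg} by $m_i$ produces precisely $\tfrac{m_i \Bar{R}_i}{\eta_i^T}$, $m_i \sum_{t=1}^T \eta_i^t \|v_i^t - v_i^{t-1}\|_\ast^2$, and $-\tfrac{m_i}{4}\sum_{t=1}^T \tfrac{1}{\eta_i^t}(\|g_i^t - x_i^t\|^2 + \|x_i^t - g_i^{t-1}\|^2)$, matching the claimed bound term for term. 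There is no genuine obstacle in this lemma; it is a bookkeeping consequence of Lemma \ref{lem:reg}, and the only point requiring care is the explicit use of $m_i > 0$ to justify factoring the weight out of the maximization and to retain the inequality direction after scaling.
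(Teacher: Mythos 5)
Your proposal is correct and matches the paper's own proof, which likewise unfolds the definition of $\mathrm{mReg}_i^T$, factors out the weight, and applies Lemma \ref{lem:reg} term by term. Your explicit remark that $m_i > 0$ is needed to commute the weight with the maximization and preserve the inequality direction is a point the paper leaves implicit, but the argument is the same.
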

\begin{proof}
    By the definition of $\mathrm{mReg}_i^T$, we have
    \begin{align*}
        \mathrm{mReg}_i^T
        = \ & \max_{x^\ast\in\Delta(A_i)} \sum^T_{t=1} m_i \langle x^\ast, v_i^t \rangle -  \sum^T_{t=1} m_i \langle x_i^t, v_i^t \rangle \\
        \leq \ & \frac{m_i \Bar{R}_i}{\eta_i^T} + \sum^T_{t=1} \eta_i^t m_i \|v_i^t - v_i^{t-1}\|_\ast^2 - \frac{m_i}{4} \sum^T_{t=1} \frac{1}{\eta_i^t} \left(\|g_i^t - x_i^t\|^2 + \|x_i^t - g_i^{t-1}\|^2\right)\,,
    \end{align*}
    by Lemma \ref{lem:reg}.
\end{proof}
\begin{lem}[Claim A.1 of \cite{anagnostides2022last}]
    \begin{align*}
        \|v_i^t - v_i^{t-1}\|_\infty \leq \sum_{j \neq i} \|x_j^t - x_j^{t-1} \|_1 \,.
    \end{align*}
\end{lem}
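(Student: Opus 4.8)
The plan is to exploit the multilinear structure of the expected utility together with the boundedness assumption $u_i : A \to [-1,1]$. Recall that the $a_i$-th coordinate of $v_i(x)$ is $u_i(a_i, x_{-i}) = \sum_{a_{-i}} u_i(a_i, a_{-i}) \prod_{j \neq i} x_j(a_j)$, so that $\|v_i^t - v_i^{t-1}\|_\infty = \max_{a_i \in A_i} |u_i(a_i, x_{-i}^t) - u_i(a_i, x_{-i}^{t-1})|$. Since this map is separately linear in each $x_j$ for $j \neq i$, the natural approach is a hybrid (telescoping) argument that swaps one opponent's strategy at a time from $x_j^{t-1}$ to $x_j^t$ and sums the resulting single-coordinate changes.

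Concretely, I would first fix an arbitrary $a_i \in A_i$ and enumerate the opponents $j \neq i$ as $j_1, \ldots, j_{n-1}$. Then I would introduce intermediate profiles $w^{(0)}, \ldots, w^{(n-1)}$, where $w^{(k)}$ uses $x_{j_\ell}^t$ for $\ell \leq k$ and $x_{j_\ell}^{t-1}$ for $\ell > k$, so that $w^{(0)} = x_{-i}^{t-1}$ and $w^{(n-1)} = x_{-i}^t$. Writing the total difference as the telescoping sum $\sum_{k=1}^{n-1} \bigl( u_i(a_i, w^{(k)}) - u_i(a_i, w^{(k-1)}) \bigr)$ reduces the claim to bounding each summand, since consecutive profiles differ only in the single coordinate $j_k$.

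For each such summand I would invoke linearity in $x_{j_k}$: holding all other arguments fixed, $u_i(a_i, \cdot)$ equals $\sum_{a_{j_k}} x_{j_k}(a_{j_k})\, c_{a_{j_k}}$, where each coefficient $c_{a_{j_k}}$ is itself an expectation of $u_i$ over the remaining players and hence lies in $[-1,1]$ because $u_i$ takes values in $[-1,1]$. The $k$-th term therefore equals $\sum_{a_{j_k}} \bigl( x_{j_k}^t(a_{j_k}) - x_{j_k}^{t-1}(a_{j_k}) \bigr) c_{a_{j_k}}$, whose absolute value is at most $\sum_{a_{j_k}} |x_{j_k}^t(a_{j_k}) - x_{j_k}^{t-1}(a_{j_k})| = \|x_{j_k}^t - x_{j_k}^{t-1}\|_1$ by Hölder's inequality. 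Summing over $k$ and then taking the maximum over $a_i \in A_i$ delivers the asserted bound.

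I do not anticipate a genuine obstacle; the only point requiring care is verifying that the per-coordinate coefficients $c_{a_{j_k}}$ remain in $[-1,1]$, so that each player's contribution is controlled purely by the $\ell_1$ movement $\|x_{j_k}^t - x_{j_k}^{t-1}\|_1$, and that the enumeration-based telescoping neither omits nor double-counts any opponent. Both follow immediately from the multilinearity of $u_i$ in the opponents' strategies and the range restriction $u_i : A \to [-1,1]$.
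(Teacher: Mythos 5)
Your proof is correct: the telescoping over hybrid profiles, the observation that each coefficient $c_{a_{j_k}}$ is a convex combination of values of $u_i$ and hence lies in $[-1,1]$, and the resulting per-opponent $\ell_1$ bound together give exactly the stated inequality. The paper itself provides no proof of this lemma --- it imports it by citation as Claim A.1 of \cite{anagnostides2022last} --- and your hybrid argument is precisely the standard proof of that claim, so there is nothing to reconcile.
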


\omd*

\begin{proof}
    By Lemma \ref{lem:mReg}, we have:
    \begin{align*}
        \mathrm{mReg}_i^T 
        \leq \ & \frac{m_i \Bar{R}_i}{\eta_i^T} + m_i \eta_i^1 \cdot c_\ast^2 \sum^T_{t=1} \|v_i^t - v_i^{t-1}\|_\infty^2 - \frac{m_i}{8\eta_i^1} \cdot c^2 \cdot \sum^T_{t=1} \left(\|g_i^t - x_i^t\|_1^2 + \|x_i^t - g_i^{t-1}\|_1^2\right)\\
        & \ - \frac{m_i}{8 \eta_i^1} \sum^T_{t=1} \left(\|x_i^t - g_i^t\|^2 + \|x_i^t - g_i^{t-1}\|^2\right)\\
        \leq \ & \frac{m_i \Bar{R}_i}{\eta_i^T} + m_i \eta_i^1 \cdot c_\ast^2(n-1) \sum^T_{t=1} \sum_{j\neq i}\|x_j^t - x_j^{t-1}\|_1^2 - \frac{m_i}{16\eta_i^1} \cdot c^2  \sum^T_{t=1} \|x_i^t - x_i^{t-1}\|_1^2 - \frac{m_i}{8\eta_i^1} \sum^T_{t=1} \left(\|x_i^t - g_i^t\|^2 + \|x_i^t - g_i^{t-1}\|^2\right) \,,
    \end{align*}
    where the second inequality follows from Lemma A.4 and the fact that:
    \begin{align*}
        \sum^T_{t=1} \|x_i^t - x_i^{t-1}\|_1^2 \ & \leq 2 \sum^T_{t=1} \|x_i^{t-1} - g_i^{t-1}\|_1^2 + 2\sum^T_{t=1}\|x_i^t - g_i^{t-1}\|_1^2\\ &  \leq 2 \sum^T_{t=1} \|x_i^t - g_i^t\|_1^2 + 2\sum^T_{t=1}\|x_i^t - g_i^{t-1}\|_1^2  \,,
    \end{align*}
    Summing it from 1 to n, we have
    \begin{align*}
        \sum^n_{i=1} \mathrm{mReg}_i^T 
        \leq \sum^n_{i=1} \frac{\Bar{R}_i m_i}{\eta_i^T} + \left(\eta^1 \Bar{m} c_\ast^2 (n-1)^2 - \frac{\underline{m} c^2}{16\eta^1}\right) \sum^n_{i=1}\sum^T_{t=1} \|x_i^t - x_i^{t-1}\|_1^2 - \frac{\underline{m}}{8\eta^1}\sum^n_{i=1}\sum^T_{t=1}  \left(\|x_i^t - g_i^t\|^2 + \|x_i^t - g_i^{t-1}\|^2\right) \,,
    \end{align*}

    Since $\eta^1 \leq \frac{c}{4c_\ast (n-1)}\sqrt{\frac{\underline{m}}{\Bar{m}}}$ and $\sum^n_{i=1}\mathrm{mReg}_i^T \geq 0$, we have
    \begin{align*}
        \sum^n_{i=1}\sum^T_{t=1}  \left(\|x_i^t - g_i^t\|^2 + \|x_i^t - g_i^{t-1}\|^2\right)
        \leq \sum^n_{i=1}\frac{8\Bar{R}_i m_i \eta^1}{\eta_i \underline{m}} \,.
    \end{align*}
    Suppose for all $t \in [T]$ that $ \sum^n_{i=1}\left(\|x_i^t - g_i^t\|^2 + \|x_i^t - g_i^{t-1}\|^2\right) > \epsilon^2$, then
    \begin{align*}
        \epsilon^2 T \leq \sum^n_{i=1}\frac{8\Bar{R}_i m_i \eta^1}{\eta_i \underline{m}}  \,.
    \end{align*}
    Thus for $T > \frac{1}{\epsilon^2 } \sum^n_{i=1}\frac{8\Bar{R}_i m_i \eta^1}{\eta_i \underline{m}} $, there exists some $t \in [T]$ such that 
    \begin{align*}
        \sum^n_{i=1}\left(\|x_i^t - g_i^t\|^2 + \|x_i^t - g_i^{t-1}\|^2\right) \leq \epsilon^2 \,.
    \end{align*}
    Thus we get$\|x_i^t-g_i^t\|\leq\epsilon$ and $\|g_i^t-g_i^{t-1}\|^2\leq 2\|x_i^t-g_i^t\|^2 + 2\|x_i^t-g_i^{t-1}\|^2\leq 2\epsilon^2$.
     Observe that the maximization problem associated with (OMD) can be expressed in the following variational inequality form:
     \begin{align*}
         \langle\eta_i^t\cdot v_i^t - \nabla R_i\left(g_i^t\right) + \nabla R_i\left(g_i^{t-1}\right) , z_i - g_i^t \rangle \leq 0 , \forall z_i\in \Delta\left(A_i\right),i\in[n].
     \end{align*}
     Thus,it follows that
     \begin{align*}
         \langle v_i^t,z_i-g_i^t\rangle \ & \leq  \frac{1}{\eta_i^t}\cdot\|\nabla R_i\left(g_i^t\right)-\nabla R_i\left(g_i^{t-1}\right)\|_\ast \cdot\|z_i-g_i^t\| \\ &\ \leq 2\epsilon\frac{G_i\cdot\Omega_i}{\eta_i},
     \end{align*}
     where the first inequality is by the Cauchy-Schwarz inequality,and the last inequality is from $R_i$ is $G_i$ smooth.Moreover, we also have that:
     \begin{align*}
        | \langle v_i^t,x_i^t - g_i^t\rangle | \leq \|v_i^t\|_\ast\cdot\|x_i^t-g_i^t\| \leq \epsilon\cdot c_\ast,
     \end{align*}
     Where we use the fact that $\|x_i^t-g_i^t\|\leq \epsilon$,and that$\|v_i^t\|_\infty\leq 1$(by the normalization hypothesis),next we have that:
     \begin{align*}
         \langle v_i^t,x_i^t\rangle \  &\ \geq  \langle v_i^t,g_i^t\rangle - \epsilon\cdot c_\ast \\ &\ \geq \langle v_i^t,z_i\rangle - \epsilon\cdot\left(c_\ast+\frac{2G_i\cdot\Omega_i}{\eta_i}\right), 
     \end{align*}
     for any $z_i\in \Delta\left(A_i\right)$ and player $i\in [n]$.So the proof follows by definition of approximate Nash equilibria. 
\end{proof}
\omdset*
\begin{proof}
    Suppose not , then there exists an open set $U$ contains the set of Nash equilibrium of the game, and a subsequence $x^{t_k}\notin U$.
    From the proof of Theorem \ref{thm:omd}, with $\eta^1 \leq \frac{c}{4c_\ast (n-1)}\frac{\underline{m}}{\Bar{m}}$ and $\sum^n_{i=1}\mathrm{mReg}_i^T \geq 0$, we have
    \begin{align*}
        \sum^n_{i=1}\sum^T_{t=1}  \left(\|x_i^t - g_i^t\|^2 + \|x_i^t - g_i^{t-1}\|^2\right)
        \leq \sum^n_{i=1}\frac{8\Bar{R}_i m_i \eta^1}{\eta_i \underline{m}} \,.
    \end{align*}

    So for any $\epsilon > 0$, there exists some $T(\epsilon) > 0$ such that $\|x_i^t - g_i^t\| \leq \epsilon$ and $\|x_i^t - g_i^{t-1}\| \leq \epsilon$ for all $t \geq T$, $i \in [n]$.

    This implies that any $x^t$ with $t \geq T(\epsilon)$ will be an $O(\epsilon)$-approximate Nash equilibrium. Further, as $\Delta(A)$ is compact, $x^{t_k} \in \Delta(A)$ is bouned, so without loss of generality, after pass a subsequence, we can assume that $x^{t_k}$ convergences to some $x_\infty \in \Delta(A)$. 

    Since $\forall \epsilon > 0$, there exists some $k > 0$ such that $\forall \ell \geq k$, $x^{t_\ell}$ is an $\epsilon$-approximate Nash equilibrium, we have that $\forall z \in \Delta(A)$, $\langle v(x^{t_\ell}), z - x^{t_\ell} \rangle \leq \epsilon$.

    Taking $l$ to infinity, we have
    \begin{align*}
        \langle v(x_\infty), z - x_\infty \rangle \leq \epsilon \,, \quad \forall z \in \Delta(A) \,, \forall \epsilon > 0 \,.
    \end{align*}
    Therefore, we have
    \begin{align*}
        \langle v(x_\infty), z - x_\infty \rangle \leq 0\,, \quad \forall z \in \Delta(A) \,.
    \end{align*}
    Hence $x_\infty$ must be a Nash equilibrium, and $x^{t_k}$ convergences to $x_\infty$, contradict to $x^{t_k}\notin U$.
\end{proof}
\begin{lem}\label{lem:converge_a_ne}
    Suppose $\{x^t\}^\infty_{t=1} \in \Delta(A)$ converges to a finite set of $E = \{y_1, \ldots, y_\iota\}$ and $\forall \epsilon > 0$, there exists $T \in \mathbb{N}^{+}$ such that for all $t \geq T$, $\|x^{t+1} - x^t\| \leq \epsilon$, then $\{x^t\}^\infty_{t=1}$ converges to a $y_j \in E$.
\end{lem}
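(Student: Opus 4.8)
The plan is to exploit the fact that the accumulation set $E$ is finite, hence its points are separated by a strictly positive minimal distance, and to combine this separation with the vanishing-increment hypothesis to show that the sequence is eventually trapped near a single $y_j$. First I would make the hypothesis precise: I read ``$\{x^t\}$ converges to the finite set $E$'' as $\mathrm{dist}(x^t, E) = \min_j \|x^t - y_j\| \to 0$, i.e. for every $\delta > 0$ there is $T_1$ with $x^t \in \bigcup_{j=1}^\iota B(y_j, \delta)$ for all $t \geq T_1$. If $\iota = 1$ the claim is immediate, so I assume $\iota \geq 2$ and set $d = \min_{j \neq k} \|y_j - y_k\| > 0$, which is positive precisely because $E$ is finite with distinct points.

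Next I would fix the scale $\delta = d/3$ and consider the open balls $B(y_j, d/3)$, which are pairwise disjoint and in fact mutually separated by distance at least $d/3$. By the vanishing-increment hypothesis there is $T_2$ with $\|x^{t+1} - x^t\| < d/3$ for all $t \geq T_2$. Setting $T = \max(T_1, T_2)$, for $t \geq T$ each $x^t$ lies in exactly one ball $B(y_{j(t)}, d/3)$, and I would argue the index $j(t)$ cannot change: if $x^t \in B(y_j, d/3)$ and $x^{t+1} \in B(y_k, d/3)$ with $j \neq k$, the triangle inequality gives $\|x^{t+1} - x^t\| \geq \|y_j - y_k\| - \|x^t - y_j\| - \|x^{t+1} - y_k\| > d - d/3 - d/3 = d/3$, contradicting $\|x^{t+1} - x^t\| < d/3$. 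Hence $j(t) \equiv j_0$ is constant for $t \geq T$, so the tail of the sequence is confined to the single ball $B(y_{j_0}, d/3)$.

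Finally I would upgrade ``trapped in one ball'' to genuine convergence to $y_{j_0}$. For $t \geq T$ and any $k \neq j_0$ we have $\|x^t - y_k\| \geq \|y_{j_0} - y_k\| - \|x^t - y_{j_0}\| > d - d/3 > d/3 > \|x^t - y_{j_0}\|$, so $y_{j_0}$ realizes $\mathrm{dist}(x^t, E)$ and thus $\mathrm{dist}(x^t, E) = \|x^t - y_{j_0}\|$ for all $t \geq T$. Since $\mathrm{dist}(x^t, E) \to 0$ by hypothesis, this forces $\|x^t - y_{j_0}\| \to 0$, i.e. $x^t \to y_{j_0} \in E$, as desired.

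I expect the main obstacle to be the ``no-jump'' step: one must choose the ball radius small enough relative to the separation $d$ that the gap between distinct balls strictly exceeds the eventual increment size, so the sequence physically cannot migrate from one cluster to another. Everything else is bookkeeping with the triangle inequality; the only conceptual subtlety is recognizing that convergence to the set together with vanishing increments forces confinement to one connected neighborhood, which for a discrete limit set singles out exactly one point.
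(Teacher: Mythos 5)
Your proof is correct and follows essentially the same route as the paper's: both use the minimal separation $d$ between the finitely many points of $E$, cover $E$ by disjoint balls of radius $d/3$, and rule out jumps between balls via the triangle inequality once increments are small enough. The only (cosmetic) difference is the final step: you conclude directly by identifying $\mathrm{dist}(x^t,E)=\|x^t-y_{j_0}\|$ for the tail, whereas the paper argues via cluster points of the bounded sequence lying in $E\cap\overline{E_j}=\{y_j\}$; both are valid.
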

\begin{proof}
If $E$ only has one point,Then the statement is trivial.So,we suppose $E$ at least has two points. Let
    \begin{align*}
        d = \inf_{j\neq k} \|y_j - y_k\| > 0 \,, \quad E_k = \left\{y \in \Delta(A) \mid \|y - y_k\| < \frac{d}{3}\right\}\,.
    \end{align*}

    Since $\{x^t\}^\infty_{t=1}$ converge to $E$, and $\cup^\iota_{k=1}E_k$ is an open set that contains $E$, there are at most finite $x^t$ that are not in $\cup^\iota_{k=1}E_k$. Without lose of generality, after filtering out these finite $x_t$, we can assume $x^t\in\cup^\iota_{k=1}E_k$ for all $t \geq 1$.

    Take $\epsilon = d/4$, then there exists $T \in \mathbb{N}^{+}$ such that $\forall t \geq T$, $\|x^{t+1} - x^t \| < d/4$. Without lose of generality, assume $x^T \in E_j$, then if $x^{T+1} \not\exists E_j$, asuume $x^{T+1} \in E_k$, we have
    \begin{align*}
        \|x^{T+1} - x^T\| 
        \geq \ & \|y_j - y_k\| - \|y_j - x^T\| - \|y_k - x^{T+1}\| \\
        \geq \ & \frac{d}{3} \,.
    \end{align*}
    As $\|x^{T+1} - x^T\| < \frac{d}{4}$, we must have $x^{T+1} \in E_j$ and similarly $x^t \in E_j$, for all $t \geq T$.

    For every convergent subsequence that $x^{t_k}\rightarrow x_\infty$ as $k \rightarrow \infty$, we have $x_\infty \in \overline{E_j}$. Since $x^t$ converges to $E$, $x_\infty \in E \cap \overline{E_j}$, we have $x_\infty = y_j$ and therefore $x^t$ converges to $y_j$
\end{proof}

\begin{lem}\label{lem:ne_eq}
    Let $E$ be the set of Nash equilibrium of a norm-form game, then the following are equivalent
    \begin{enumerate}
        \item $E$ is a finite set.
        \item $d = \inf_{x, y \in E, x \neq y} \|x-y\| > 0$.
    \end{enumerate}
\end{lem}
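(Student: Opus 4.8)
The plan is to treat this as a purely metric-topological statement: the only feature of $E$ that matters is that it is a subset of the compact simplex $\Delta(A)$, so I would prove the two implications separately and invoke no special structure of the equilibrium set. The underlying principle is that a $d$-separated subset of a totally bounded set (for $d>0$) must be finite, together with the elementary fact that a finite set has a positive minimum pairwise distance.

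For the direction (1) $\Rightarrow$ (2), I would argue directly. If $E$ is empty or a singleton, the index set $\{(x,y) : x,y\in E,\ x\neq y\}$ is empty and $d=\inf\emptyset=+\infty>0$ by convention, consistent with the paper's treatment of a one-point equilibrium set, so (2) holds. If $E$ is finite with at least two points, then the collection of pairwise distances $\{\|x-y\| : x,y\in E,\ x\neq y\}$ is a finite set of strictly positive reals (strictly positive because $x\neq y$ forces $\|x-y\|>0$), so its infimum is attained and equals the minimum, which is positive; hence $d>0$.

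For the direction (2) $\Rightarrow$ (1), I would argue by contraposition: assuming $E$ is infinite, I show $d=0$. Since $E$ is infinite I can select a sequence $\{z_k\}_{k=1}^\infty$ of pairwise distinct points of $E$. Because $E\subseteq\Delta(A)$ and $\Delta(A)$ is compact, the Bolzano--Weierstrass property yields a convergent subsequence $z_{k_\ell}\to z_\infty$. A convergent sequence is Cauchy, so for every $\epsilon>0$ there exist indices $\ell\neq\ell'$ with $\|z_{k_\ell}-z_{k_{\ell'}}\|<\epsilon$; these are distinct members of $E$, whence $d\le\|z_{k_\ell}-z_{k_{\ell'}}\|<\epsilon$. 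As $\epsilon$ was arbitrary, $d=0$, contradicting (2), which establishes the implication.

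I do not expect a genuine obstacle here, since the content is elementary once the right framing is chosen. The only points requiring care are the degenerate cases ($E$ empty or a singleton), which I handle via the $\inf\emptyset=+\infty$ convention, and the correct invocation of compactness of $\Delta(A)$ to supply the convergent subsequence driving the $d=0$ conclusion.
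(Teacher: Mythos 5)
Your proof is correct, and for the substantive direction (2)~$\Rightarrow$~(1) it takes a genuinely different route from the paper. The paper also argues via Bolzano--Weierstrass from an infinite $E$, but it then invests effort in showing that the resulting cluster point $x$ is \emph{itself} a Nash equilibrium: it passes the inequality $\langle v(x^t), z - x^t\rangle \le 0$ to the limit (implicitly using continuity of the payoff field $v$, which holds by multilinearity) to conclude $x \in E$, and only then pairs $x$ with the approximating equilibria $x^t \neq x$ to force $d \le \|x - x^t\| \to 0$. You sidestep that game-theoretic step entirely: by extracting a convergent, hence Cauchy, subsequence of pairwise distinct points of $E$, you obtain two \emph{distinct points already in $E$} at distance less than any $\epsilon$, so $d = 0$ follows with no appeal to the structure of Nash equilibria. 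Your argument is thus purely metric --- it shows that any $d$-separated subset of a totally bounded set is finite --- and it does not require $E$ to be closed, whereas the paper's version needs (and proves in passing) exactly that closedness; that by-product is not wasted in the paper, since the fact that limits of Nash equilibria are Nash equilibria is the same continuity argument reused in its other convergence proofs (e.g., the proof that OMD iterates converge to the set of equilibria). You are also more careful than the paper on the direction (1)~$\Rightarrow$~(2), which the paper dismisses as trivial: your handling of the empty and singleton cases via the $\inf \emptyset = +\infty$ convention is consistent with the paper's convention of setting $d = 1$ for a one-point equilibrium set, so no gap arises there.
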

\begin{proof}
The proof for the first statement to the second statement is trivial, so we focus on the other direction in this proof. 

Suppose $E$ is not a finite set. Then as $E \subseteq \Delta(A)$ is bounded, there exists a cluster point $x$ of $E$ and we next show that $x \in E$ is a Nash equilibrium. 

As $x$ is a cluster point of $E$, there exists $\{x^t\}^\infty_{t=1} \in E$ with $x^t\neq x$ such that $x^t \rightarrow x$. Because $x^t$ is a Nash equilibrium, we have
\begin{align*}
    \langle v(x^t), z- x^t \rangle \leq 0, \forall z \in \Delta(A) \,.
\end{align*}
By having $t \rightarrow \infty$, we have
\begin{align*}
    \langle v(x), z- x \rangle \leq 0, \forall z \in \Delta(A) \,.
\end{align*}
This thus implies that $x \in E$ and 
\begin{align*}
    d = \inf_{x, y \in E, x \neq y} \|x-y\|\leq \|x - x^t\| \rightarrow 0 \,,
\end{align*}
as $t \rightarrow \infty$. Then $d = 0$ and this completes the proof.
\end{proof}
\omdsecond*
\begin{proof}
   By Lemma \ref{lem:ne_eq}, $d = \inf_{x, y \in E, x \neq y} \|x-y\| > 0$ implies that $E$ is a finite set. 

   Using the same logic as the proof of Theorem \ref{thm:omd}, for any $\epsilon > 0$, we can find $T(\epsilon) > 0$ such that $\forall t \geq T$, 
   \begin{align*}
       \left\|x^t - g^t\right\|^2 + \left\|x^t - g^{t-1}\right\|^2 \leq \epsilon^2\,.
   \end{align*}

   So 
   \begin{align*}
       \left\|x^t - x^{t-1}\right\|^2
       \leq \ & \left\|x^t - g^{t-1}\right\|^2 + \left\|x^{t-1} - g^{t-1}\right\|^2 \\
       \leq \ & 2 \epsilon^2 \,.
   \end{align*}
   For $\forall t\geq T+1$, with Theorem B.1, $\{x^t\}^\infty_{t=1}$ satisfy the conditions for Lemma \ref{lem:converge_a_ne}. Therefore, $x^t$ converges to a Nash equilibrium of the game.
\end{proof}

\newpage

\section{Analysis for OFTRL}

\oftrl*
\begin{proof}
    Let $x^t$ denotes the OMD's sequence iterate that are produced with the same learning rate and $R_i$ to that considered of the OFTRL's. We first show that $\hat{x}^t = x^t$ for all of $t \geq 1$ by induction. 

    When $t = 1$, $\hat{x}_i^1$ given by equation \ref{eq:oftrl}, we have $\nabla R_i(\hat{x}_i^1) = \eta_i \hat{v}_i^0=\eta_i v_i^0$. For $x_i^1$ given by the OMD update rule, we have
    \begin{align*}
        \nabla R_i(x_i^1) = \eta_i v_i^0 + \nabla R_i (g_i^0) = \eta_i v_i^0 \,.
    \end{align*}
    As the gradient of the Legendre function is invertible, we have $\hat{x}_i^1 = x_i^1$ for any $i \in [n]$. 

    Suppose that we have $\hat{x}^s = x^s$, for all $s\leq t$. we next show that $\hat{x}^{t+1} = x^{t+1}$. By the update rule of OFTRL and OMD, we have
    \begin{align*}
        \nabla R_i (\hat{x}_i^{t+1}) = \ & \eta_i \left(\hat{v}_i^t + \sum^t_{s=1} \hat{v}_i^s\right)  \\=\ & \eta_i \left(v_i^t + \sum^t_{s=1} v_i^s\right) \,\\
        \nabla R_i (x_i^{t+1}) = \ & \eta_i v_i^t + \nabla R_i (g_i^{t})  \\
        = \ & \eta_i v_i^t + \eta_i v_i^t + \nabla R_i (g_i^{t-1}) \\
        = \ & \eta_i \left(v_i^t + \sum^t_{s=1} v_i^s\right) \,.
    \end{align*}
    Therefore, we have $\hat{x}^{t+1} = x^{t+1}$ and we have proved the claim through induction. 

    With this, the rest of the Theorem follows by using Theorem \ref{thm:omd}.
\end{proof}
\oftrlsecond*
\begin{proof}
    Since $\hat{x}^t=x^t,\forall t\geq 1$ by the proof of Theorem 7.3, and by Theorem 7.2, we can conclude that $\hat{x}^t$ converges to a Nash equilibrium of the game.
\end{proof}
\section{Learning with Corruptions}
\begin{lem}
    \begin{align*}
         \sum_{t=1}^T\langle x_i^\ast - \tilde{x}_i^t,v_i^t\rangle \leq \frac{\Bar{R}_i}{\eta_i^T} + \sum^T_{t=1} \eta_i^t\|v_i^t - v_i^{t-1}\|_\ast^2 - \frac{1}{4}\sum^T_{t=1}\frac{1}{\eta_i^t} \left(\|g_i^t - \tilde{x}_i^t \|^2 + \|\tilde{x}_i^t - g_i^{t-1}\|^2\right), \forall x_i^\ast\in\Delta(A_i),i\in[n]\,.
    \end{align*}
\end{lem}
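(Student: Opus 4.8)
The plan is to mirror the structure of Lemma \ref{lem:1} and Lemma \ref{lem:reg}, but with the corrupted iterate $\tilde{x}_i^t$ playing the role that $x_i^t$ played in the uncorrupted analysis. The key observation is that the corruption only affects the \emph{committed} strategy $x_i^t = \tilde{x}_i^t + c_i^{t+1}$, whereas $\tilde{x}_i^t$ and $g_i^t$ are still generated by genuine proximal (Bregman) updates against the payoff vectors. Concretely, $\tilde{x}_i^{t+1}$ is the $\argmax$ of $\eta_i \langle x_i, v_i^t\rangle - D_{R_i}(x_i, g_i^t)$ and $g_i^{t+1}$ is the $\argmax$ of $\eta_i \langle g_i, v_i^{t+1}\rangle - D_{R_i}(g_i, g_i^t)$, so both satisfy the same variational inequalities used in Lemma \ref{lem:1}. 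Thus the entire derivation of Lemma \ref{lem:1} goes through verbatim after substituting $\tilde{x}_i^t$ for $x_i^t$.

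First I would reproduce the three-term decomposition
\begin{align*}
    \langle x_i^\ast - \tilde{x}_i^t, v_i^t \rangle = \langle g_i^t - \tilde{x}_i^t, v_i^t - v_i^{t-1}\rangle + \langle g_i^t - \tilde{x}_i^t, v_i^{t-1}\rangle + \langle x_i^\ast - g_i^t, v_i^t\rangle\,,
\end{align*}
bound the first term by Cauchy--Schwarz, and apply the optimality (variational inequality) characterization of the proximal steps to the second and third terms. Here I would use the generic proximal inequality proved inside Lemma \ref{lem:1}: for $a^\ast = \argmax_{a} \eta\langle a, x\rangle - D_R(a,c)$ one has $\langle d - a^\ast, x\rangle \leq \frac{1}{\eta}[D_R(d,c) - D_R(d,a^\ast) - D_R(a^\ast,c)]$. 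Applying this with $a^\ast = \tilde{x}_i^t$ (against prediction $v_i^{t-1}$, center $g_i^{t-1}$) and $a^\ast = g_i^t$ (against $v_i^t$, center $g_i^{t-1}$) yields the two Bregman bounds with $\tilde{x}_i^t$ in place of $x_i^t$, and invoking $1$-strong convexity $D_{R_i}(x,g)\geq \frac12\|x-g\|^2$ produces the negative quadratic terms $-\frac12\|g_i^t - \tilde{x}_i^t\|^2 - \frac12\|\tilde{x}_i^t - g_i^{t-1}\|^2$.

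Next I would telescope the Bregman terms over $t = 1,\ldots,T$ exactly as in Lemma \ref{lem:1}, using that the learning rate is non-increasing so that $\frac{1}{\eta_i^t} - \frac{1}{\eta_i^{t-1}} \geq 0$ and $D_{R_i}(x_i^\ast, g_i^{t-1}) \leq \bar{R}_i$, collapsing the divergence sum to $\bar{R}_i/\eta_i^T$. Finally, to pass from the coefficient $\|g_i^t - \tilde{x}_i^t\|\,\|v_i^t - v_i^{t-1}\|_\ast$ to the stated form, I would apply Young's inequality with $\rho^t = 2\eta_i^t$ exactly as in Lemma \ref{lem:reg}, converting $\sum_t \|g_i^t - \tilde{x}_i^t\|\,\|v_i^t - v_i^{t-1}\|_\ast$ into $\sum_t \eta_i^t \|v_i^t - v_i^{t-1}\|_\ast^2$ plus a $\frac{1}{4\eta_i^t}\|g_i^t - \tilde{x}_i^t\|^2$ term that is absorbed, halving the quadratic coefficient from $\frac12$ to $\frac14$. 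I do not expect a genuine obstacle here, since the corruption $c_i^t$ never enters these proximal updates; the one point requiring care is simply to verify that the left-hand side is the \emph{corruption-free} regret-like quantity $\sum_t \langle x_i^\ast - \tilde{x}_i^t, v_i^t\rangle$ (not involving the committed $x_i^t$), so that the lemma cleanly isolates the proximal dynamics and defers the corruption bookkeeping $\sum_i \|c_i^t\|_1$, $M_i$, $C_i$ to the subsequent theorems.
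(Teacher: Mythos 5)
Your proposal is correct and matches the paper's intended argument: the paper itself omits this proof, stating only that it is ``similar to Lemma~\ref{lem:reg},'' which is precisely the substitution you carry out --- since $\tilde{x}_i^t$ and $g_i^t$ are still genuine Bregman proximal steps against the (corrupted-trajectory) payoff vectors $v_i^{t-1}$ and $v_i^t$, the three-term decomposition, the variational inequality, strong convexity, the telescoping under non-increasing $\eta_i^t$, and Young's inequality with $\rho^t = 2\eta_i^t$ all go through verbatim with $\tilde{x}_i^t$ in place of $x_i^t$. You also correctly identify the one point of care, namely that the corruption $c_i^t$ enters only through the committed iterates (and hence through $v_i^t$, which is irrelevant to the proximal optimality conditions), so the bookkeeping of $\|c_i^t\|_1$ is rightly deferred to Lemma~\ref{lem:corr_reg}.
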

\begin{proof}
    The proof is similar to Lemma A.2, so we omit it.
\end{proof}
\begin{lem}\label{lem:corr_reg}
    \begin{align*}
        \mathrm{mReg}_i^T \leq \frac{m_i\cdot\Bar{R}_i}{\eta_i^T} + \sum^T_{t=1} m_i\cdot\eta_i^t\|v_i^t - v_i^{t-1}\|_\ast^2 - \frac{m_i}{4}\sum^T_{t=1}\frac{1}{\eta_i^t} \left(\|g_i^t - \tilde{x}_i^t \|^2 + \|\tilde{x}_i^t - g_i^{t-1}\|^2\right) + \sum^T_{t=1} m_i\cdot\|c_i^t\|_1\,.
    \end{align*}
\end{lem}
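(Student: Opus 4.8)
The plan is to reduce this corrupted weighted-regret bound to the uncorrupted regret inequality stated in the preceding lemma, at the cost of an additive corruption term. First I would start from the definition $\mathrm{mReg}_i^T = \max_{x^\ast \in \Delta(A_i)} m_i \sum_{t=1}^T \langle x^\ast - x_i^t, v_i^t\rangle$ and split the \emph{played} iterate into its uncorrupted and corrupted parts. Since $x_i^t = \tilde{x}_i^t + c_i^t$, I can rewrite $\langle x^\ast - x_i^t, v_i^t\rangle = \langle x^\ast - \tilde{x}_i^t, v_i^t\rangle - \langle c_i^t, v_i^t\rangle$, which cleanly separates the comparator-dependent part (involving the uncorrupted $\tilde{x}_i^t$) from a corruption penalty.

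Next, because the term $-m_i\sum_{t=1}^T \langle c_i^t, v_i^t\rangle$ does not depend on the comparator $x^\ast$, I would pull it outside the maximization, obtaining
\begin{align*}
\mathrm{mReg}_i^T = m_i \max_{x^\ast \in \Delta(A_i)}\sum_{t=1}^T \langle x^\ast - \tilde{x}_i^t, v_i^t\rangle - m_i \sum_{t=1}^T \langle c_i^t, v_i^t\rangle \,.
\end{align*}
The first term is exactly the quantity controlled by the preceding lemma, which holds for every fixed $x_i^\ast$ and hence for the maximizer. Applying it and multiplying through by $m_i > 0$ reproduces the $m_i\Bar{R}_i/\eta_i^T$, the variation term $\sum_t m_i\eta_i^t\|v_i^t - v_i^{t-1}\|_\ast^2$, and the negative quadratic term $-\frac{m_i}{4}\sum_t \frac{1}{\eta_i^t}(\|g_i^t - \tilde{x}_i^t\|^2 + \|\tilde{x}_i^t - g_i^{t-1}\|^2)$ appearing in the claimed inequality.

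For the corruption penalty I would bound each summand by Hölder's inequality, $-\langle c_i^t, v_i^t\rangle \leq \|c_i^t\|_1 \|v_i^t\|_\infty \leq \|c_i^t\|_1$, invoking the normalization $u_i : A \rightarrow [-1,1]$, which forces every coordinate of the payoff field $v_i^t$ into $[-1,1]$ and thus $\|v_i^t\|_\infty \leq 1$. Summing over $t$ and scaling by $m_i$ yields the final $\sum_{t=1}^T m_i\|c_i^t\|_1$ term, completing the bound. I do not expect a serious obstacle here; the only points requiring care are the sign convention $x_i^t = \tilde{x}_i^t + c_i^t$ (so that the correction enters with the correct sign and the penalty is a genuine upper bound), and the observation that the corruption term is comparator-independent, which is what legitimately allows it to leave the maximum.
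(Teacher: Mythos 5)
Your proposal is correct and matches the paper's own proof essentially step for step: the paper likewise decomposes $\mathrm{mReg}_i^T$ as $\max_{x^\ast}\sum_t m_i\langle x^\ast-\tilde{x}_i^t,v_i^t\rangle+\sum_t m_i\langle \tilde{x}_i^t-x_i^t,v_i^t\rangle$ (your $-\langle c_i^t,v_i^t\rangle$ is exactly $\langle\tilde{x}_i^t-x_i^t,v_i^t\rangle$), applies the preceding uncorrupted-iterate lemma to the comparator term, and bounds the corruption term via H\"older with $\|v_i^t\|_\infty\leq 1$. No gaps; your extra care about the comparator-independence of the corruption term and the sign convention is exactly the right bookkeeping.
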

\begin{proof}
   \begin{align*}
    \mathrm{mReg}_i^T \ & =\max_{x^\ast\in\Delta(A_i)}\sum_{t=1}^T m_i\langle x^\ast - \tilde{x}_i^t,v_i^t\rangle + \sum_{t=1}^T m_i\langle \tilde{x}_i^t-x_i^t,v^t_i\rangle \\ & \ \leq \frac{m_i\cdot\Bar{R}_i}{\eta_i^T} + \sum^T_{t=1} m_i\cdot\eta_i^t\|v_i^t - v_i^{t-1}\|_\ast^2 - \frac{m_i}{4}\sum^T_{t=1}\frac{1}{\eta_i^t} \left(\|g_i^t - \tilde{x}_i^t \|^2 + \|\tilde{x}_i^t - g_i^{t-1}\|^2\right) + \sum^T_{t=1} m_i\cdot\|c_i^t\|_1\,.   
   \end{align*}
   The last inequality is by Lemma C.1,and since $\|v_i^t\|_\infty\leq 1$.
\end{proof}

\omdcorrupt*
\begin{proof}
    By Lemma \ref{lem:corr_reg}, we have
    \begin{align*}
        \mathrm{mReg}_i^T 
        \leq \ & \frac{m_i \cdot \Bar{R}_i}{\eta_i^T} + m_i \cdot \eta_i^1 \cdot c_\ast^2 \sum^T_{t=1} \|v_i^t - v_i^{t-1}\|_\infty^2 - \frac{m_i}{8\eta_i^1} \cdot c^2 \cdot \sum^T_{t=1} \left(\|g_i^t - \tilde{x}_i^t\|_1^2 + \| \tilde{x}_i^t - g_i^{t-1}\|_1^2\right)\\
        & \ - \frac{m_i}{8 \eta_i^1} \sum^T_{t=1} \left(\|\tilde{x}_i^t - g_i^t\|^2 + \|\tilde{x}_i^t - g_i^{t-1}\|^2\right) + \sum^T_{t=1}m_i \cdot \|c_i^t\|_1\\
        \leq \ & \frac{m_i\cdot\Bar{R}_i}{\eta_i^T} + m_i\cdot \eta_i^1 \cdot c_\ast^2(n-1) \sum^T_{t=1} \sum_{j\neq i}\|x_j^t - x_j^{t-1}\|_1^2 - \frac{m_i}{16\eta_i^1} \cdot c^2  \sum^T_{t=1} \|\tilde{x}_i^t \\
        & \- \tilde{x}_i^{t-1}\|_1^2 - \frac{m_i}{8\eta_i^1} \sum^T_{t=1} \left(\|\tilde{x}_i^t - g_i^t\|^2 + \|\tilde{x}_i^t - g_i^{t-1}\|^2\right) + \sum^T_{t=1}m_i\cdot \|c_i^t\|_1\,,
    \end{align*}
    Where the second inequality is by Lemma A.4 and we use the fact that:
    \begin{align*}
        \sum_{t=1}^T\|\tilde{x}_i^t - \tilde{x}_i^{t-1}\|_1^2 \leq 2\sum_{t=1}^T\|g_i^t - \tilde{x}_i^t\|_1^2 + 2\sum_{t=1}^T\|\tilde{x}_i^t - g_i^{t-1}\|_1^2
    \end{align*}

     Summing it from 1 to n :
    \begin{align*}
        \sum^n_{i=1} \mathrm{mReg}_i^T 
        \leq \ & \sum^n_{i=1} \frac{m_i\cdot\Bar{R}_i}{\eta_i^T} +\Bar{m}\cdot \eta^1 \cdot c_\ast^2 (n-1)^2 \sum^n_{i=1}  \sum^T_{t=1} \|x_i^t - x_i^{t-1}\|_1^2 - \frac{c^2\cdot\underline{m}}{16\eta^1} \sum^n_{i=1}  \sum^T_{t=1} \|\tilde{x}_i^t - \tilde{x}_i^{t-1}\|_1^2 \\
        & \ - \frac{\underline{m}}{8\eta^1} \sum^n_{i=1} \sum^T_{t=1} \left(\|\tilde{x}_i^t - g_i^t\|^2 + \|\tilde{x}_i^t - g_i^{t-1}\|^2\right) + \sum^n_{i=1} \sum^T_{t=1}m_i\cdot \|c_i^t\|_1\\
        \leq \ & \sum^n_{i=1} \frac{m_i\cdot\Bar{R}_i}{\eta_i^T} + \left(3\eta^1\Bar{m} c_\ast^2(n-1)^2 - \frac{\underline{m}c^2}{16 \eta^1}\right)\sum^n_{i=1}  \sum^T_{t=1} \|\tilde{x}_i^t - \tilde{x}_i^{t-1}\|_1^2 + 3\Bar{m}\eta^1 c_\ast^2 (n-1)^2 \sum^n_{i=1}  \sum^T_{t=1} \left(\|c_i^t\|_1^2 + \|c_i^{t-1}\|_1^2\right) \\
        & \ - \frac{\underline{m}}{8\eta^1} \sum^n_{i=1} \sum^T_{t=1} \left(\|\tilde{x}_i^t - g_i^t\|^2 + \|\tilde{x}_i^t - g_i^{t-1}\|^2\right) + \sum^n_{i=1} \sum^T_{t=1}m_i\cdot \|c_i^t\|_1\,.
    \end{align*}
    Where the last inequality is by the fact that:
    \begin{align*}
        \|x_i^t-x_i^{t-1}\|_1^2 \leq 3\|x_i^t-\tilde{x}_i^{t}\|_1^2 + 3\|x_i^{t-1}-\tilde{x}_i^{t-1}\|_1^2 + 3\|\tilde{x}_i^t-\tilde{x}_i^{t-1}\|_1^2
    \end{align*}
    Since we choose $\eta^1$ such that $\eta^1 \leq \frac{c}{4(n-1)c_\ast}\cdot\sqrt{\frac{\underline{m}}{3\Bar{m}}}$ and $\sum_{i=1}^n\mathrm{mReg}_i^T \geq 0$ we have
    \begin{align*}
        \sum^n_{i=1}\sum^T_{t=1}  \left(\|\tilde{x}_i^t - g_i^t\|^2 + \|\tilde{x}_i^t - g_i^{t-1}\|^2\right)
        \leq \sum^n_{i=1}\frac{8m_i\cdot\Bar{R}_i  \eta^1}{\eta_i\cdot\underline{m} } + 48 (\eta^1)^2\cdot\frac{\Bar{m}}{\underline{m}} c_\ast^2 (n-1)^2  \sum^n_{i=1} M_i\cdot C_i + 8 \eta^1\cdot\frac{\Bar{m}}{\underline{m}}  \sum^n_{i=1}C_i    \,.
    \end{align*}

    We use $\sum_{t=1}^T\|c_i^t\|_1^2\leq M_i\cdot\sum_{t=1}^T\|c_i^t\|_1\leq M_i\cdot C_i$.

    Suppose for all $t \in [T]$ that $ \sum^n_{i=1}\left(\|\tilde{x}_i^t - g_i^t\|^2 + \|\tilde{x}_i^t - g_i^{t-1}\|^2\right) > \epsilon^2$, then
    \begin{align*}
        \epsilon^2 T \leq  \sum^n_{i=1}\frac{8m_i\cdot\Bar{R}_i  \eta^1}{\eta_i\cdot\underline{m} } + 48 (\eta^1)^2\cdot\frac{\Bar{m}}{\underline{m}} c_\ast^2 (n-1)^2  \sum^n_{i=1}M_i\cdot C_i  + 8 \eta^1\cdot\frac{\Bar{m}}{\underline{m}}  \sum^n_{i=1}C_i  \,.
    \end{align*}
    Thus for $T > \frac{1}{\epsilon^2 } \{\sum^n_{i=1}\frac{8m_i\cdot\Bar{R}_i  \eta^1}{\eta_i\cdot\underline{m} } + 48 (\eta^1)^2\cdot\frac{\Bar{m}}{\underline{m}} c_\ast^2 (n-1)^2  \sum^n_{i=1}M_i\cdot C_i  + 8 \eta^1\cdot\frac{\Bar{m}}{\underline{m}}  \sum^n_{i=1}C_i\} $, there exists some $t \in [T]$ such that 
    \begin{align*}
        \sum^n_{i=1}\left(\|\tilde{x}_i^t - g_i^t\|^2 + \|\tilde{x}_i^t - g_i^{t-1}\|^2\right) \leq \epsilon^2 \,.
    \end{align*}
     Thus we get$\|\tilde{x}_i^t-g_i^t\|\leq\epsilon$ and $\|g_i^t-g_i^{t-1}\|^2\leq 2\|\tilde{x}_i^t-g_i^t\|^2 + 2\|\tilde{x}_i^t-g_i^{t-1}\|^2\leq 2\epsilon^2$.
     Observe that the maximization problem associated with (OMD) can be expressed in the following variational inequality form:
     \begin{align*}
         \langle\eta_i^t\cdot v_i^t - \nabla R_i\left(g_i^t\right) + \nabla R_i\left(g_i^{t-1}\right) , z_i - g_i^t \rangle \leq 0 , \forall z_i\in \Delta\left(A_i\right),i\in[n].
     \end{align*}
     Thus,it follows that
     \begin{align*}
         \langle v_i^t,z_i-g_i^t\rangle \ & \leq  \frac{1}{\eta_i^t}\cdot\|\nabla R_i\left(g_i^t\right)-\nabla R_i\left(g_i^{t-1}\right)\|_\ast \cdot\|z_i-g_i^t\| \\ &\ \leq 2\epsilon\frac{G_i\cdot\Omega_i}{\eta_i},
     \end{align*}
     where the first inequality is by the Cauchy-Schwarz inequality,and the last inequality is from $R_i$ is $G_i$ smooth.Moreover, we also have that:
     \begin{align*}
        | \langle v_i^t,\tilde{x}_i^t - g_i^t\rangle | \leq \|v_i^t\|_\ast\cdot\|\tilde{x}_i^t-g_i^t\| \leq \epsilon\cdot c_\ast,
     \end{align*}
     And
     \begin{align*}
         | \langle v_i^t,\tilde{x}_i^t - x_i^t\rangle | \leq \|v_i^t\|_\infty\cdot\|c_i^t\|_1 \leq  \|c_i^t\|_1.
     \end{align*}
     Where we use the fact that $\|\tilde{x}_i^t-g_i^t\|\leq \epsilon$,and that$\|v_i^t\|_\infty\leq 1$(by the normalization hypothesis),next we have that:
     \begin{align*}
         \langle v_i^t,x_i^t\rangle \ & \geq \langle v_i^t,\tilde{x}_i^t\rangle - \|c_i^t\|_1 \\ &\ \geq  \langle v_i^t,g_i^t\rangle - \epsilon\cdot c_\ast - \|c_i^t\|_1 \\ &\ \geq \langle v_i^t,z_i\rangle - \epsilon\cdot\left(c_\ast+\frac{2G_i\cdot\Omega_i}{\eta_i}\right) - \|c_i^t\|_1, 
     \end{align*}
     for any $z_i\in \Delta\left(A_i\right)$ and player $i\in [n]$.So the proof follows by definition of approximate Nash equilibria.
\end{proof}
\omdcorruptnashset*
\begin{proof}
    Since $\sum_{t=1}^\infty\|c_i^t\|_1<\infty,\forall i\in[n]$,and by the proof of Theorem 8.1, $\forall\epsilon>0$, we can find $T(\epsilon)>0,\forall t\geq T$,$x^t$ is an $\epsilon$-approximate Nash equilibrium.Therefore, similar to the proof of Theorem B.1,$x^t$ also converges to the set of Nash equilibrium of the game.
\end{proof}
\omdcorruptnash*
\begin{proof}
    We only need to proof $\forall\epsilon>0,\exists T>0,\forall t\geq T,\|x^t-x^{t-1}\|\leq\epsilon$.
    
    Since $\sum_{t=1}^\infty\|c_i^t\|_1<\infty,\forall i\in[n]$,we can find $T_i>0,\forall t\geq T_i-1,\|c_i^t\|<\frac{\epsilon}{4}$(The norms of finite dimensional normed linear Spaces are equivalent to each other).

    By the proof of Theorem 8.1, there exists a $\hat{T}_i>0,\forall t\geq \hat{T}_i,$
    \begin{align*}
        \sum^n_{i=1}\left(\|\tilde{x}_i^t - g_i^t\|^2 + \|\tilde{x}_i^t - g_i^{t-1}\|^2\right) \leq \frac{\epsilon^2}{8} \,.
    \end{align*}
    Thus,
     \begin{align*}
       \left\|\tilde{x}_i^t - \tilde{x}_i^{t-1}\right\|^2
       \leq \ & \left\|\tilde{x}_i^t - g^{t-1}\right\|^2 + \left\|\tilde{x}_i^{t-1} - g^{t-1}\right\|^2 \\
       \leq \ & \frac{\epsilon^2}{4}  \,.
   \end{align*} 
   Let$T=\max_{i\in[n]}\{T_i,\hat{T}_i\}$,we have that for $\forall t\geq T$:
   \begin{align*}
       \left\|x_i^t - x_i^{t-1}\right\|
       \leq \ &  \|c_i^t\|+ \left\|\tilde{x}_i^t - \tilde{x}_i^{t-1}\right\| + \|c_i^{t-1}\| \\
       \leq \ &  \epsilon \,.
   \end{align*}
\end{proof}
\oftrlcorrupt*
\begin{proof}
   Define $x_i^0 = g_i^0 = \argmin_{x_i \in \Delta(A_i)} R_i(x_i)$,Let:
\begin{align*}
    \tilde{x}_i^{t+1} = \ & \argmax_{x_i \in \Delta(A_i)} \eta_i \langle x_i, v_i^t \rangle - D_{R_i}(x_i, g_i^t) \,,\\ x_i^{t+1}=\ & \tilde{x}_i^{t+1}+c_i^{t+1},\\
    v_i^{t+1} = \ & v_i(x^{t+1}) \\
    g_i^{t+1} = \ & \argmax_{g_i \in \Delta(A_i)} \eta_i \langle g_i, v_i^{t+1} \rangle - D_{R_i}(g_i, g_i^t) \,.
\end{align*}
We first simultaneous show that $\tilde{y}^t =\tilde{x}^t$ and $y^t = x^t$ for all of $t \geq 1$ by induction. 

    When $t = 1$,  we have $\nabla R_i(\tilde{y}_i^1) = \eta_i \hat{v}_i^0=\eta_i v_i^0$. For $\tilde{x}_i^1$ given by our definition, we have
    \begin{align*}
        \nabla R_i(\tilde{x}_i^1) = \eta_i v_i^0 + \nabla R_i (g_i^0) = \eta_i v_i^0 \,.
    \end{align*}
    As the gradient of the Legendre function is invertible, we have $\tilde{y}_i^1 = \tilde{x}_i^1$ for any $i \in [n]$.Thus $y_i^1 = x_i^1$ for any $i \in [n]$. 

    Suppose that we have $\tilde{y}^s = \tilde{x}^s$,and $y_i^s = x_i^s$ for all $s\leq t$. we next show that $\tilde{y}^{t+1} = \tilde{x}^{t+1}$,and $y_i^{t+1} = x_i^{t+1}$. By the update rule of Non-honest OFTRL and Non-honest OMD, we have
    \begin{align*}
        \nabla R_i (\tilde{y}_i^{t+1}) = \ & \eta_i \left(\hat{v}_i^t + \sum^t_{s=1} \hat{v}_i^s\right)  \\=\ & \eta_i \left(v_i^t + \sum^t_{s=1} v_i^s\right) \,\\
        \nabla R_i (\tilde{x}_i^{t+1}) = \ & \eta_i v_i^t + \nabla R_i (g_i^{t})  \\
        = \ & \eta_i v_i^t + \eta_i v_i^t + \nabla R_i (g_i^{t-1}) \\
        = \ & \eta_i \left(v_i^t + \sum^t_{s=1} v_i^s\right) \,.
    \end{align*}
    Therefore, we have $\tilde{y}^{t+1} = \tilde{x}^{t+1}$ and $y_i^{t+1} = x_i^{t+1}$. And we have proved the claim through induction. 

    With this, the rest of the Theorem follows by using Theorem 8.1.
\end{proof}
\oftrlcorruptnash*
\begin{proof}
    Since $y^t=x^t,\forall t\geq 1$ by the proof of Theorem D.3, and by Theorem D.2, we can conclude that $y^t$ converges to a Nash equilibrium of the game.  
\end{proof}

\end{document}